\documentclass[journal]{IEEEtran}
\IEEEoverridecommandlockouts

\usepackage{cite}
\usepackage{amsmath,amssymb,amsfonts}
\usepackage{amsthm}

\usepackage{graphicx}
\usepackage[small]{caption}

\usepackage{subcaption}
\usepackage{adjustbox}
\usepackage{textcomp}
\usepackage{multirow}
\usepackage{multicol}
\usepackage{booktabs}
\usepackage{bigstrut}
\usepackage{verbatim}
\usepackage{stackengine}
\usepackage{array}
\usepackage{rotating}
\usepackage{xcolor}
\usepackage{dblfloatfix}
\usepackage{booktabs}
\usepackage{multirow}
\usepackage{makecell}
\usepackage{adjustbox}
\usepackage{algorithm}
\usepackage{amsthm}
\newtheorem{theorem}{Theorem}

\usepackage{algpseudocode} % from algorithmicx
\usepackage[table,xcdraw]{xcolor}
\definecolor{mygreen}{RGB}{0,130,0}

\usepackage[a4paper, total={184mm,239mm}]{geometry}
\AtBeginDocument{%
  \providecommand\BibTeX{{%
    \normalfont B\kern-0.5em{\scshape i\kern-0.25em b}\kern-0.8em\TeX}}}

\begin{document}

\title{Bridging AQFP Technology Legalization and Physical Design: Layout-Aware Buffer and Splitter Insertion via Width--Depth Product Minimization}

\author{Robert S. Aviles, Ziyu Liu, Madhav Danturthi, Peter A. Beerel, \IEEEmembership{Senior Member, IEEE}\thanks{This work has been supported by ARL DEVCOM under the FSDL: ColdPhase project, grant number W911NF-24-1-0317.

The authors are with the Department of Electrical and Computer Engineering,
University of Southern California, Los Angeles, CA 90007 USA (e-mail:
rsaviles@usc.edu;  zliu4130@usc.edu, danturth@usc.edu,pabeerel@usc.edu)}}

\maketitle
 
\begin{abstract}
Adiabatic Quantum Flux Parametron (AQFP) is an emerging superconducting technology that enables ultra-low energy dissipation approaching the Shannon limit. However, its gate-level pipelining and explicit fanout constraints require technology legalization through buffer and splitter insertion to ensure path balancing and signal distribution, becoming a critical and costly step in the design flow. Prior work has focused on minimizing inserted cell count and logic depth, yet these objectives do not accurately capture the final physical design cost, which is fundamentally governed by the product of circuit width and depth.

In this article, we redefine AQFP buffer and splitter insertion optimization as minimizing the circuit width--depth product, a layout-aware metric that more accurately captures physical design area than prior cell minimization efforts. We are the first to formulate buffer and splitter insertion under this objective and prove that the resulting problem is NP-complete. To address this complexity, we develop scalable heuristics that integrates legalization with this objective.

Experimental results on standard benchmarks demonstrate that our approach achieves an average 30\% reduction in post-placement area compared to state-of-the-art methods, with only a 3\% increase in junction count, and on individual circuits up to 61\% area reduction, demonstrating the effectiveness of the proposed objective in reducing true design cost.

\end{abstract}

\begin{IEEEkeywords}
Superconducting logic circuits, design automation, circuits synthesis, beyond CMOS, digital circuits.
\end{IEEEkeywords}

\section{Introduction}

The Adiabatic Quantum Flux Parametron (AQFP) superconducting logic family~\cite{AQFP} is an emerging platform for high-performance, ultra-low-power computing. AQFP circuits utilize superconducting Josephson junctions (JJs) operating at 4 K cryogenic temperatures and are particularly attractive due to their high-speed operation (up to 5 GHz), zero static power dissipation, and ultra-low-energy adiabatic operation. Even when accounting for cooling overhead, AQFP achieves up to two orders of magnitude improvement in energy–delay product (EDP) compared to CMOS implementations~\cite{cooling_overhead}. As such, it is a promising technology for quantum in-fridge computation, including control electronics~\cite{takeuchi2024microwave}, error correction ~\cite{SFQ_ECC} and data preprocessing~\cite{IcePack}. As well as classical workloads such as binary neural networks~\cite{BNN_AQFP2023}, microprocessors~\cite{MANA}, and cryptographic accelerators~\cite{SCE-NTT}.

\begin{figure}[htbp]
\includegraphics[width=\columnwidth]{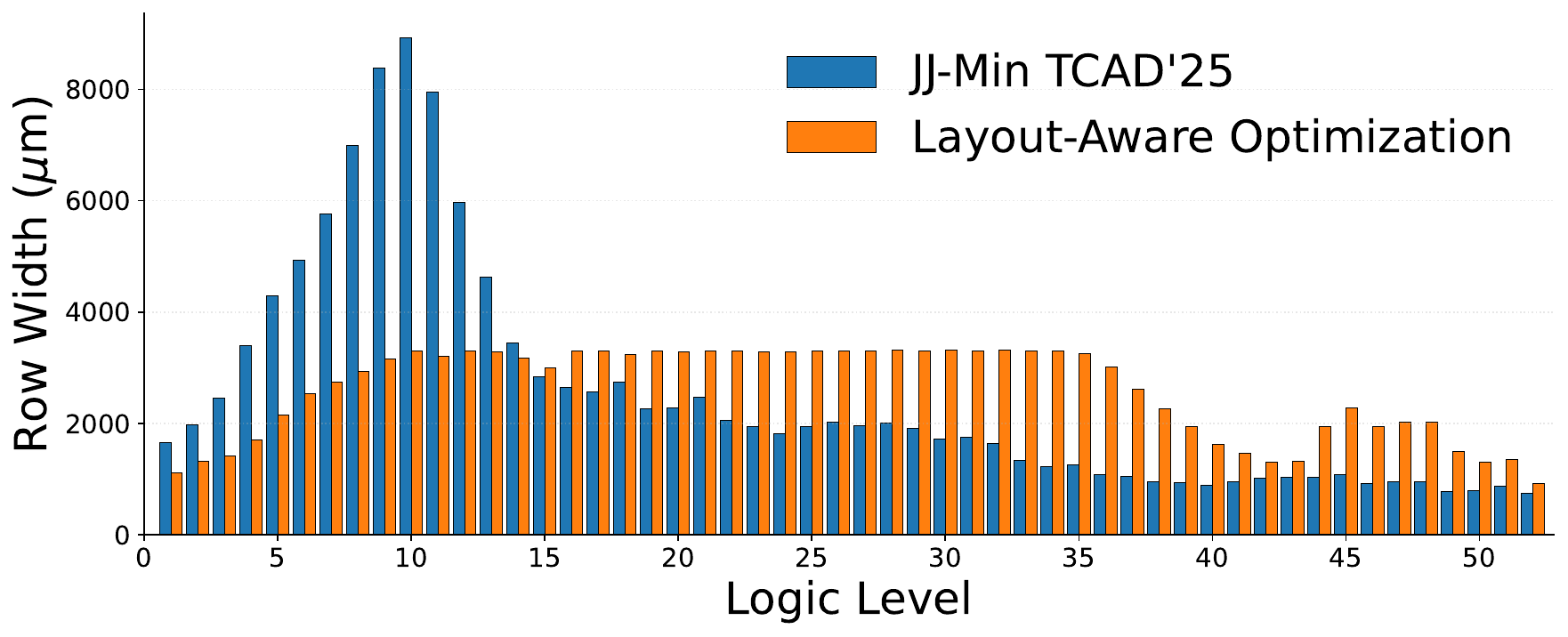}% This is a *.png file
%\vspace{-0.6cm}
\centering
\caption{Row widths of \texttt{c3540}, our layout-aware approach significantly reduces maximum row width.}\label{fig:c3540_dist}
\end{figure}

Despite these advantages, AQFP imposes unique architectural and timing constraints that complicate large-scale design and require specialized electronic design automation (EDA) support~\cite{SFQ_EDA}. In particular, all AQFP gates are clocked, and signals with fanout greater than one must be driven by explicit clocked splitter gates. Furthermore, ensuring correct timing across imbalanced paths is particularly challenging in AQFP, necessitating strict path balancing whereby all inputs to a gate traverse the same number of clocked cells. As a result, AQFP design flows include a post-synthesis technology legalization step that inserts buffers and splitters to enforce path balancing and satisfy fanout constraints~\cite{BS_TCAD_Lee}. This step introduces significant overhead and often dominates circuit area and delay.

Prior work has extensively studied buffer and splitter insertion with the objective of minimizing total JJ count and circuit depth~\cite{BS_TCAD_Fu,BS_TCAD_Lee,bs-2,heuristicASP,DAC22,SOTA,BS_Optimal_AQFP,BS_AQFP_2026,AvilesNPhase}. However, these metrics do not accurately reflect the final physical design cost in AQFP.  In contrast to CMOS, where transistor count is often a reasonable proxy for area due to flexible placement, AQFP layouts are constrained to place each logic level in its own physical row~\cite{AQFPPlacement, Rowbased_Layouts}. Consequently, circuit area is determined not only by the number of inserted elements but also by their spatial distribution across logic levels. As a result, minimizing JJ count alone can lead to suboptimal physical designs by poorly distributing logic assignments as shown in Fig.~\ref{fig:c3540_dist}. Instead, we observe that AQFP circuit area is more accurately captured by the product of circuit width (maximum row width) and circuit depth (row count), which reflects the dominant layout dimensions.

Based on this observation, we redefine the AQFP legalization optimization problem as minimizing the circuit width--depth product during buffer and splitter insertion, yielding a layout-aware optimization objective that more accurately captures physical design area.

More precisely, this article makes the following contributions:
\begin{itemize}
    \item We propose the minimization of the circuit width--depth product during buffer and splitter insertion as a key optimization problem and prove that it is NP-complete.
    \item We develop heuristic algorithms to optimize this layout-aware objective during AQFP technology legalization.
    \item We demonstrate scalable runtime performance and achieve an average 30\% reduction in post-placement area compared to state-of-the-art methods, with on average only a 3\% increase in junction count, and up to 61\% area reduction on individual circuits, validating the proposed objective as a more accurate proxy for physical design cost.
 
\end{itemize}

\section{Background}
 
AQFP is a clocked superconducting logic family that achieves ultra-low-power operation by adiabatically transitioning between an off state and the logic states `1' and `0'~\cite{AQFP}. This transition is driven by an AC excitation current, with a small DC bias, that simultaneously serves as both the power supply and clock. Logical values are encoded by the direction of input currents, which must satisfy both setup and hold timing constraints with respect to the rising edge of the AC clock for every gate~\cite{AQFP_Timing}. When these timing constraints are satisfied, the input currents establish persistent magnetic flux in either the left or right superconducting loop, generating the corresponding output current while the AC excitation remains active.  The fundamental AQFP buffer is illustrated in Fig.~\ref{fig:AQFPBuffer}.

Basic AQFP logic is constructed from Majority-3 (MAJ3) gates, whose output is determined by the majority polarity of three input currents~\cite{scl}. Together with signal inversion, which is realized by appropriately designed transformers, the MAJ3 gate forms a functionally complete logic basis~\cite{maj3}. AQFP libraries therefore commonly consist of four fundamental cell types: clocked buffers, NOT gates, constant cells, and branch cells that implement both majority and splitter functionality~\cite{scl}.  These cell types form the target library for AQFP logic synthesis and technology legalization.

\begin{figure}[t]
\includegraphics[width=0.75\columnwidth]{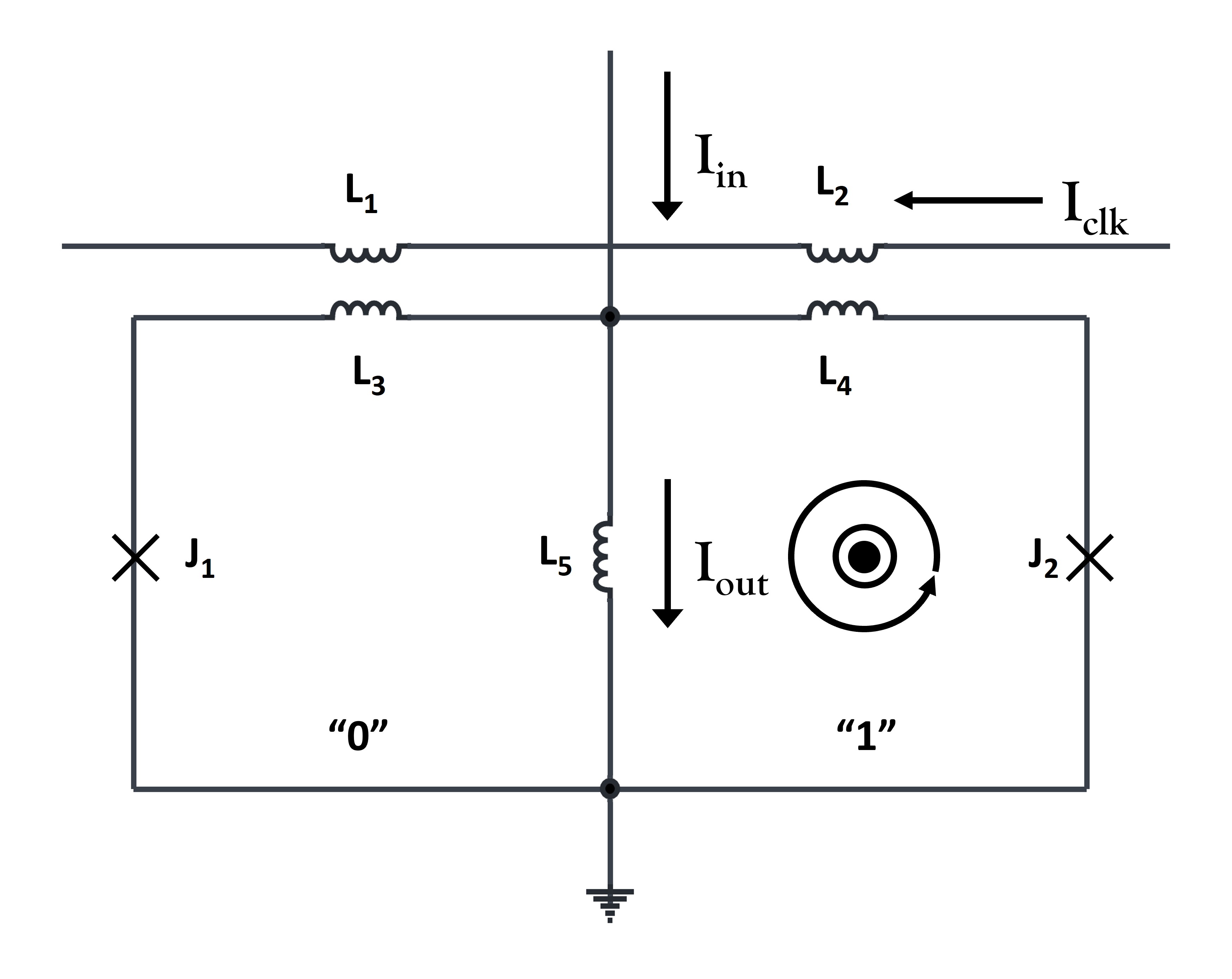}% This is a *.png file
%\vspace{-0.6cm}
\centering
\caption{Circuit Construction of AQFP Buffer.  Current directions shown for when flux is in the right loop, corresponding to a logical `1'.}\label{fig:AQFPBuffer}
\end{figure}

Using this cell library, automated AQFP design flows begin with majority-based logic synthesis using conventional optimization techniques. The synthesized netlist then undergoes {\em technology legalization} by inserting explicit clocked splitter cells to drive multi-fanout signals and clocked buffers to balance reconvergent paths, thereby ensuring correct synchronization across all logic paths. These technology legalization steps introduce substantial overhead and frequently dominate the final cell count. In particular, prior work has focused on minimizing the number of inserted buffers and splitters, or equivalently total Josephson junction (JJ) count, through improved logic-level assignment~\cite{BS_TCAD_Fu,BS_TCAD_Lee,bs-2,heuristicASP,DAC22,SOTA,BS_Optimal_AQFP,BS_AQFP_2026,AvilesNPhase}.  Despite these optimizations, AQFP legalization of an 8-bit multiplier containing only 77 logic gates requires 371 inserted buffers and splitters~\cite{BS_TCAD_Lee}.

After synthesis and technology legalization, each logic level is typically assigned a dedicated placement row, as illustrated in Fig.~\ref{fig:AQFPlayouts}, with all cells in a row sharing a common AC clock line. The AC clock is phase delayed between adjacent rows so that data transfer satisfies setup and hold constraints between source and sink cells~\cite{qPRO}. During placement, additional rows of clocked buffers are frequently introduced to satisfy maximum interconnect length constraints, increasing circuit depth, JJ count and physical area~\cite{AQFPPlacement}.

\begin{figure}[htbp]
\includegraphics[width=0.6\columnwidth]{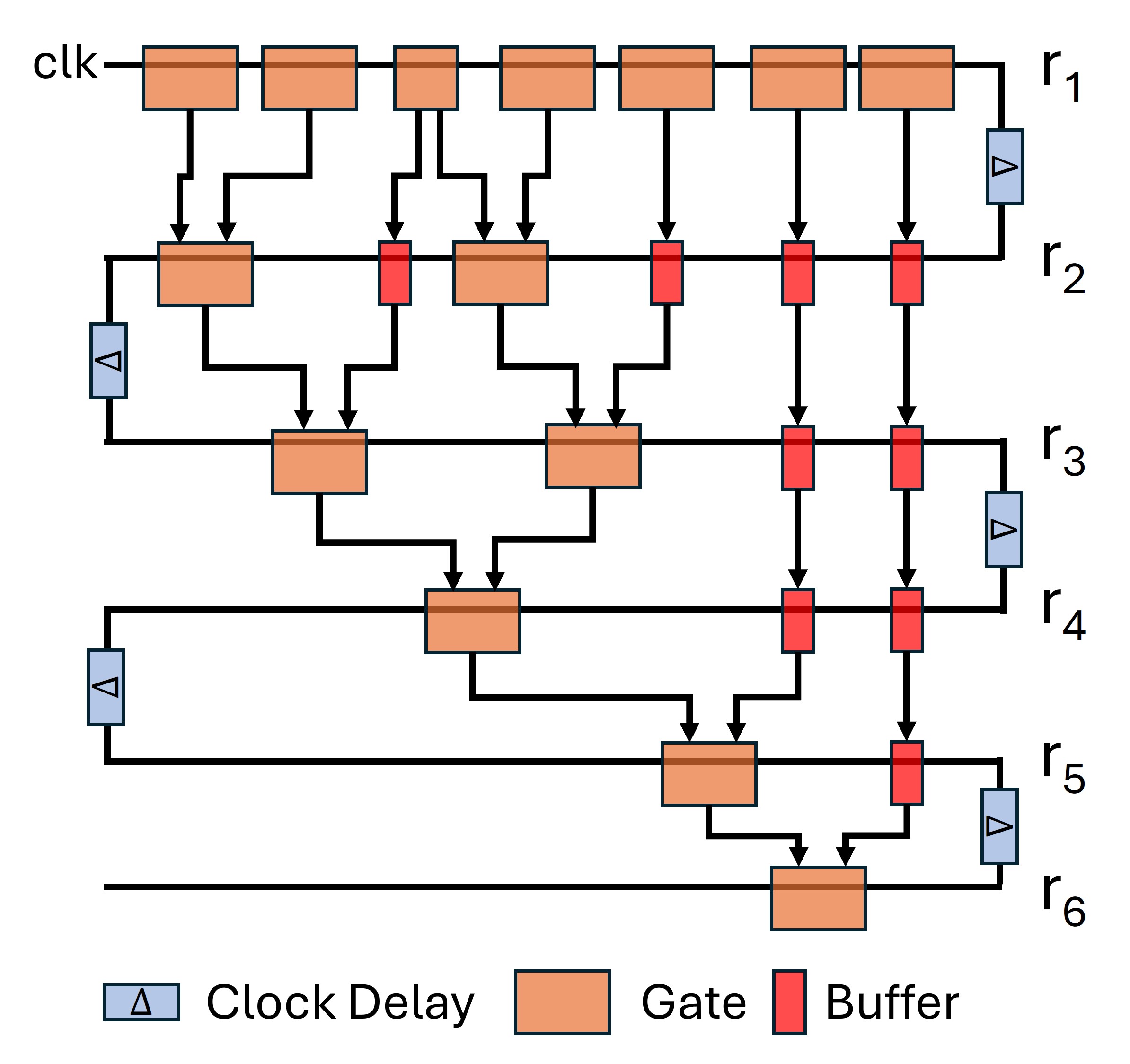}% This is a *.png file
%\vspace{-0.6cm}
\centering
\caption{AQFP placement constrains each logic level to be assigned to a single physical row.}\label{fig:AQFPlayouts}
\end{figure}

\section{Redefining AQFP Technology Optimization}

The row-based organization imposed by AQFP creates a unique coupling between logic-level assignment and physical placement that has remained unexplored.  This observation motivates revisiting AQFP technology legalization as a layout-aware optimization problem, developed in the following subsection.

\subsection{Layout-Aware Optimization}
To directly target physical design cost, we formulate AQFP technology legalization as a layout-aware optimization over the set of row assignments (R), as well as the set of buffers (B) and splitters (S) inserted for legalization.  Given an input logical netlist $\mathcal{G}_L=(\mathrm{L},\mathrm{E})$, the objective is to construct an AQFP netlist $\mathcal{G}=(\mathrm{V},\mathrm{E}')$, where $\mathrm{V}=\mathrm{L}\cup\mathrm{B}\cup\mathrm{S}$, such that all cell library fanout constraints are satisfied, the circuit is fully path balanced, and the product  of maximum row width and circuit depth is minimized.  

Let $C(i)$ denote the physical width of cell $i$, let $r_i$ denote its assigned row, and $s_i$ denote its maximum fanout. Let $\mathcal{D}_{\max}$ denote the maximum allowable circuit depth. The resulting optimization problem is formulated as follows.

\begin{equation}
\label{eq:objective}
\min \quad \mathcal{W}\cdot\mathcal{D}\hspace*{0.15in} \textrm{s.t.}
\end{equation}

\begin{equation}
\label{eq:width}
\mathcal{W}
=
\max_{r}
\left(
\sum_{i\in V:\,r_i=r} C(i)
\right)
\end{equation}

\begin{equation}
\label{eq:Dmax}
\mathcal{D} \leq \mathcal{D}_{\max}
\end{equation}

\begin{equation}
\label{eq:po}
r_i = \mathcal{D},
\qquad \forall i \in \mathrm{O}
\end{equation}

\begin{equation}
\label{eq:pi}
r_i = 0,
\qquad \forall i \in \mathrm{I}
\end{equation}

\begin{equation}
\label{eq:logic_fanout}
|\mathrm{FO}(i)| \le s_i,
\qquad \forall i \in \mathrm{V}
\end{equation}
\begin{equation}
\label{eq:pipeline}
r_j = r_i + 1,
\qquad \forall (i,j)\in E'
\end{equation}

Notably, the minimum feasible circuit depth does not necessarily minimize physical area. Allowing increased circuit depth enlarges the search space for feasible row assignments along critical paths, often enabling reductions in maximum row width and overall area. The proposed formulation therefore explores circuit depths up to a user-specified bound $\mathcal{D}_{\max}$, allowing designers to balance physical implementation quality against latency requirements.

\subsection{Problem Complexity}
The computational complexity of the proposed formulation is established by proving that its corresponding decision problem is NP-complete, from which NP-hardness of the optimization problem follows immediately.

The corresponding decision problem, asks for a given objective bound $K$, if there exists a legalized AQFP netlist satisfying Eqs.~(\ref{eq:width})--(\ref{eq:pipeline}) such that
\[
\mathcal W\mathcal D\le K.
\]

The following theorem establishes NP-completeness of the proposed layout-aware decision problem.

\begin{theorem}
The decision version of the layout-aware AQFP legalization problem is NP-complete.
\end{theorem}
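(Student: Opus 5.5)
The proof has two parts: membership in NP, and a polynomial reduction from a known NP-complete problem. For membership, the certificate is the legalized netlist $\mathcal{G}=(\mathrm{V},\mathrm{E}')$ together with the row assignment $r$. The size of this certificate has to be polynomial. Inserted buffers and splitters can be charged to edges and rows: each original edge needs at most $\mathcal{D}_{\max}$ buffers, and each fanout tree needs at most $|\mathrm{FO}|\cdot\mathcal{D}_{\max}$ splitters. Taking $\mathcal{D}_{\max}$ bounded by a polynomial in the input (or given in unary) keeps $|\mathrm{V}|$ polynomial. Checking Eqs.~(\ref{eq:width})--(\ref{eq:pipeline}) is then a linear scan: sum the cell widths per row, check the fanout bounds, check $r_j=r_i+1$ on every edge, pin the inputs and outputs, and compare $\mathcal{W}\mathcal{D}\le K$.

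For hardness I would reduce from a scheduling or partition problem in which the difficulty is balancing load across levels. A natural source is \emph{Precedence-Constrained Scheduling} (unit jobs, $m$ machines, deadline $T$). Since the reduction is free to pick $K$, a more elementary choice also works: \textbf{3-PARTITION}, which is strongly NP-complete, so the numbers can be encoded in unary as cell widths.

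The construction goes as follows. Let the instance be $a_1,\dots,a_{3m}$ with target $B$. Fix the depth by setting $\mathcal{D}_{\max}=\mathcal{D}=m+1$ and placing a long chain from a primary input to a primary output, so the depth cannot be reduced. For each item $a_j$, create one logic gate of width $C=a_j$ (or a unary gadget of $a_j$ unit-width cells). Its row is free within $1..m$, and it connects to the input and output through buffers. The filler buffers on each item path and on the backbone add a known, row-uniform width $F$ to every row. Set $K=(B+F')\cdot(m+1)$, where $F'$ is that uniform overhead. Then the question of whether every row has total item width at most $B$ is exactly 3-PARTITION.

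The main obstacle is that the overhead from buffers and splitters is not uniform across rows. A gate placed in row $t$ forces buffers on rows $1..t-1$ on its input side and on rows $t+1..m$ on its output side. I would handle this in one of two ways. The first is to make buffer width negligible: scale the item widths by a large factor $M$ so that the total buffer width, at most $3m\cdot\mathcal{D}$, is less than $M$, and choose $K$ with that slack. The second is to note that each item path contributes exactly one cell (buffer or gate) per row, which makes the overhead per row exactly $3m\cdot C(\mathrm{buf})$ regardless of placement. The second view is cleaner and I would try it first. Splitter overhead at the shared primary input is also independent of placement if each item has its own primary input, which avoids fanout altogether. Finally, I would check both directions. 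A valid partition gives a row assignment with width $B\cdot M+\text{const}$. Conversely, any netlist meeting $K$ must have depth exactly $m+1$ (if I fix $\mathcal{D}$ via the backbone and $\mathcal{D}_{\max}$), and then every row's item sum is at most $B$, which together with $\sum a_j=mB$ forces equality.
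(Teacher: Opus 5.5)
Your proof is correct in outline but reaches NP-hardness by a different route than the paper. The paper reduces from the minimum-machine scheduling problem of Finke \emph{et al.} Each unit task becomes a unit-width gate and each precedence pair an edge, $\mathcal{D}_{\max}$ is set to the minimum schedule length, and the library is made degenerate (zero-width buffers, unbounded fanout). Legalization then drops out of the objective and row width is exactly the machine count. You instead reduce from 3-PARTITION. Your logical graph is essentially trivial (disjoint PI--gate--PO paths plus a backbone that pins $\mathcal{D}=m+1$), so all the hardness lives in heterogeneous gate widths. This is legitimate because $C(i)$ is part of the instance and 3-PARTITION is strongly NP-complete, so the widths stay polynomial. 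The paper's route gives hardness with uniform gate widths, driven by precedence structure alone, but only by erasing buffers and splitters by fiat. Yours gives hardness with buffers of genuinely positive width and fanout bound $1$ everywhere, so no splitters are needed. It also explicitly handles pinning inputs and outputs to rows $0$ and $\mathcal{D}$ and forcing the depth, which the paper's identification of rows with time steps leaves implicit.

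Two details need repair. First, your preferred ``second view'' miscounts the buffer overhead. Item $j$'s gate occupies the slot its own path would otherwise fill with a buffer. So if $n_k$ item gates sit in row $k$, that row carries buffer width $(3m-n_k)\beta$, not $3m\beta$. With $C(h_j)=a_j$ and the bound $B+3m\beta$ (plus the backbone cell), the converse only gives $\sum_{j:t_j=k}a_j\le B+n_k\beta$. That no longer forces every row sum to equal $B$. There are two fixes:
\begin{itemize}
\item Give each item gate width $a_j+\beta$. Every row $k\in\{1,\dots,m\}$ then has width exactly $\sum_{j:t_j=k}a_j+3m\beta$ plus the backbone cell.
\item Use your scaling variant, with $M$ larger than $3m\beta$ plus the backbone width.
\end{itemize}
In either case, make the I/O cells narrow enough that rows $0$ and $m+1$ never set $\mathcal{W}$; both fixes then go through as you describe.

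Second, for membership you need not assume $\mathcal{D}_{\max}$ is polynomial or unary. Any internal row containing only buffers can be removed by splicing each buffer out and shifting later rows up. This leaves all other row widths and fanout counts unchanged while lowering $\mathcal{D}$. Hence every yes-instance has a certificate whose depth is bounded by the number of gates and splitters, which is also what the paper's bound $\mathcal{D}\le|L|+|S|$ tacitly relies on.
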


\begin{proof}

To establish NP-hardness we show reduction from the \emph{Minimum Machine Scheduling Problem}, proven NP-complete by Finke \emph{et al.}~\cite{Min_machine_scheduling}. An instance of this problem consists of $(T,\prec,m)$, where $T$ is a set of unit-time tasks, $\prec$ defines precedence constraints, and $m$ is the maximum number of machines active at any time. The decision problem asks whether there exists a minimum-length schedule satisfying $\prec$ that requires at most $m$ machines.

Given an instance $(T,\prec,m)$, we construct an AQFP legalization instance as follows. Fix the allowable circuit depth to the minimum schedule length,
\[
\mathcal D_{\max}=D_{\min},
\]
where \(D_{\min}\) denotes the minimum feasible schedule length of the given scheduling instance.

Then for every task $t_i\in T$, create one logic gate $v_i\in L$ with unit width
\[
C(v_i)=1.
\]
For every precedence relation
\[
t_i \prec t_j,
\]
construct one edge
\[
(v_i,v_j)\in E.
\]

To eliminate legalization effects not present in the scheduling problem, the AQFP cell library is restricted such that every logic gate supports unlimited fanout ($s_i=\infty$) and path-balancing buffers have zero width. With this input cell library, splitter insertion is eliminated and inserted buffers do not contribute to the optimization objective. Under these assumptions, the AQFP legalization problem reduces to minimizing the maximum row width subject only to precedence constraints.

In this mapping, each task becomes a logic gate, each AQFP row corresponds to one scheduling time step, and each row width represents the number of machines required to execute all tasks assigned to that time step. Assigning gate $v_i$ to row $r_i$ is therefore equivalent to scheduling task $t_i$ at time step $r_i$. Since every gate has unit width, the width of row $r$,
\[
\mathcal{W}(r)
=
\sum_{i:r_i=r} C(v_i),
\]
is exactly the number of simultaneously executing tasks, and consequently
\[
\mathcal{W}
=
\max_r \mathcal{W}(r)
\]
is precisely the number of machines required by the schedule.

Therefore, because precedence constraints are preserved exactly by construction, there exists a minimum-depth AQFP legalization having
\[
\mathcal{W}\le m
\]
if and only if the original scheduling instance admits a minimum-length schedule requiring at most $m$ machines. Since the construction is linear in the number of tasks and precedence constraints, this reduction is computable in polynomial time. Hence, the AQFP layout-aware optimization problem is NP-hard.

To establish membership in NP, we first show that the certificate, namely the legalized AQFP netlist, has polynomial size with respect to the input logical netlist.

The legalization process constructs one splitter tree for each multi-fanout signal. A splitter tree driving $k$ outgoing edges contains at most $k-1$ splitters. Summing over all splitter trees therefore yields

\[
|S|
\leq
\sum_i (k_i-1)
<
\sum_i k_i
\le
|E|,
\]

and consequently,

\[
|S| \le |E|.
\]

Likewise, each legalized edge may require a chain of path-balancing buffers whose length is bounded by the maximum possible circuit depth. Because the circuit depth cannot exceed the total number of logic gates and splitters,

\[
\mathcal{D} \le |L|+|S|,
\]

the total number of inserted buffers satisfies

\[
|B|
\le
|E|(|L|+|S|)
\le
|E|(|L|+|E|).
\]

Therefore,

\[
|V|
=
|L|+|S|+|B|
=
O(|E|^2),
\]
and the legalized AQFP netlist is polynomial in the size of the input netlist.

Given a legalized AQFP netlist together with row assignments for all vertices, verification consists of checking that every edge satisfies Eq.~\ref{eq:pipeline}, every vertex satisfies the fanout constraint of Eq.~\ref{eq:logic_fanout}, and computing the maximum row width according to Eq.~\ref{eq:width}. Each of these operations requires only a single traversal of the legalized netlist and therefore executes in linear time with respect to its size. Since the certificate is polynomially bounded with respect to the input, verification likewise requires polynomial time. Hence, the layout-aware AQFP legalization problem belongs to NP.

Combining NP membership with the NP-hardness established above proves that the decision version of the layout-aware AQFP legalization problem is NP-complete.
\end{proof}

Although the proposed formulation reduces to the minimum machine scheduling problem under the simplifying assumptions used in the proof, the full AQFP legalization problem additionally requires simultaneous optimization of path-balancing buffers, splitter tree topology, and circuit depth. These technology-specific constraints fundamentally distinguish layout-aware AQFP legalization from classical scheduling formulations and preclude the direct application of existing scheduling algorithms.

\section{Layout-Aware Optimization Flow}
To address the complexity of the optimization problem, the proposed framework iteratively minimizes circuit width under fixed splitter trees and a fixed circuit depth before exploring alternative splitter topologies and progressively increasing circuit depth, as illustrated in Fig.~\ref{fig:flow}. Each iteration evaluates candidate gate moves, where a gate is reassigned to an adjacent upstream or downstream row such that it minimizes a global width score. 
\begin{figure}[htbp]
\includegraphics[width=0.6\columnwidth]{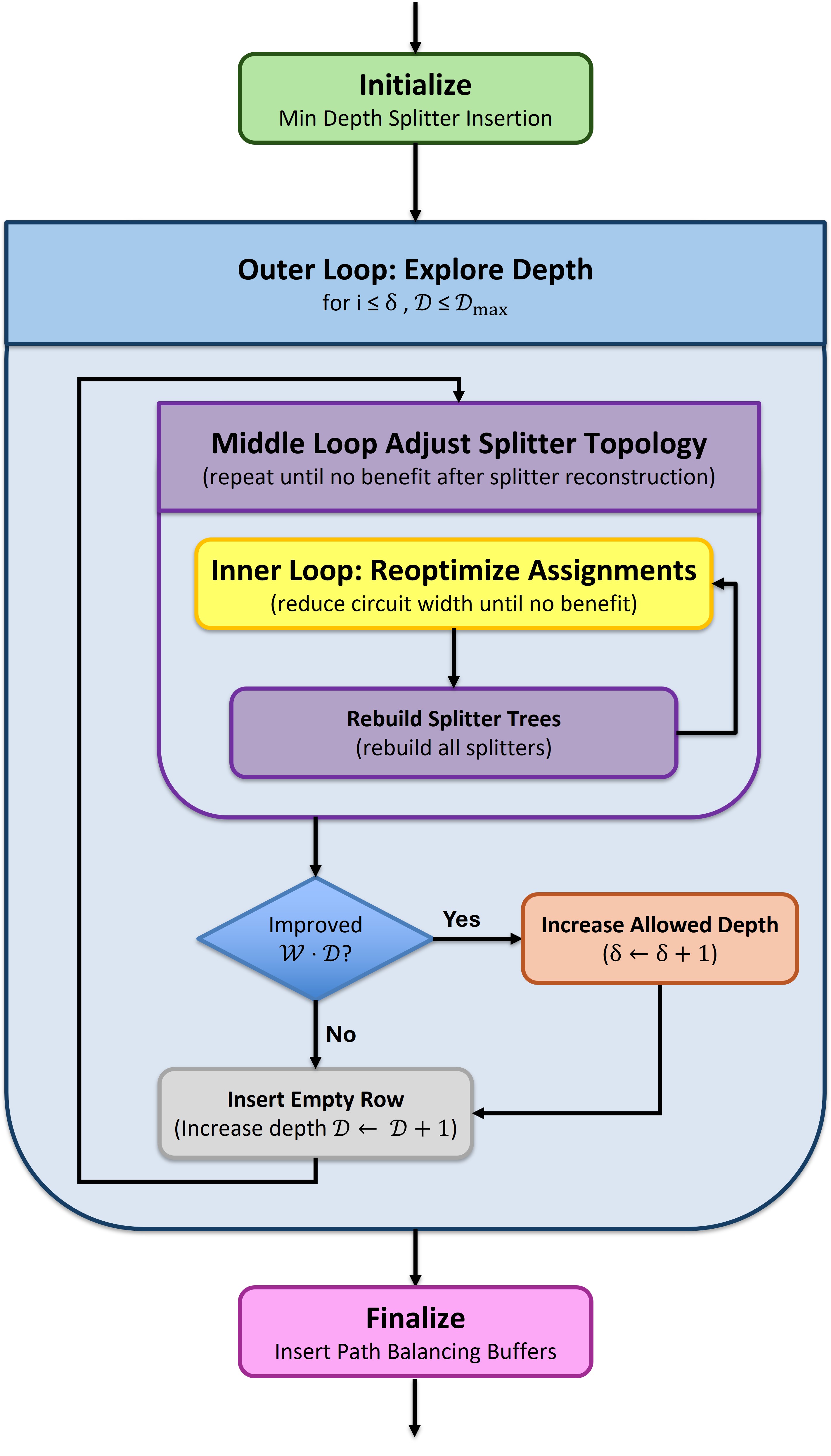}% This is a *.png file
%\vspace{-0.6cm}
\centering
\caption{Layout-Aware top-level flow}\label{fig:flow}
\end{figure}

Gate move selection must account for the impact of all affected Transitive Fanin/Fanout (TFI/TFO) gates as well as any buffers required to maintain legality. Explicitly inserting and removing these buffers after every candidate move would repeatedly modify the graph representation, substantially increasing both graph size and move evaluation cost. Instead, the proposed framework maintains the inserted splitter trees throughout optimization while path-balancing buffer insertion is modeled implicitly until final row assignments have been determined. 

To model the buffer insertion impacts during optimization, we maintain a vector of row widths \textbf{W} of length Depth, where each entry is calculated as: 
\begin{equation}
\label{eq:row_width}
\mathrm{\textbf{W}}[r]
=
\sum_{i \in V:\, r_i = r} C(i)
+
\sum_{\substack{(i,j)\in E \\ r_i < r < r_j}}
\beta
\end{equation}
The first term captures the contribution of cells assigned to row $r$, while the second term accounts for path-balancing buffers, each of width $\beta$, required by edges spanning row $r$.  Our circuit width is then defined as $\mathcal{W}=\max_r \mathbf{W}[r]$ and we define the set of peak rows $P=\left\{r \;\middle|\;\textbf{W}[r]=\mathcal{W}\right\}$.  In our flow each candidate move $i$ results in a modified row width distribution \textbf{$W_i$} and our cost function selects the move with the minimum cost distribution for enactment.

In the remaining section, we detail all of the subproblems before describing their unification into our overall flow, namely: initialization and splitter insertion Sect~\ref{sec:init}, calculation of move impacts Sect~\ref{sec:move_impacts}, ranking candidate moves Sect~\ref{sec:scoring}, and heuristic depth exploration Sect~\ref{sec:depth_increase}.

\begin{figure*}[htbp]
\includegraphics[width=\linewidth]{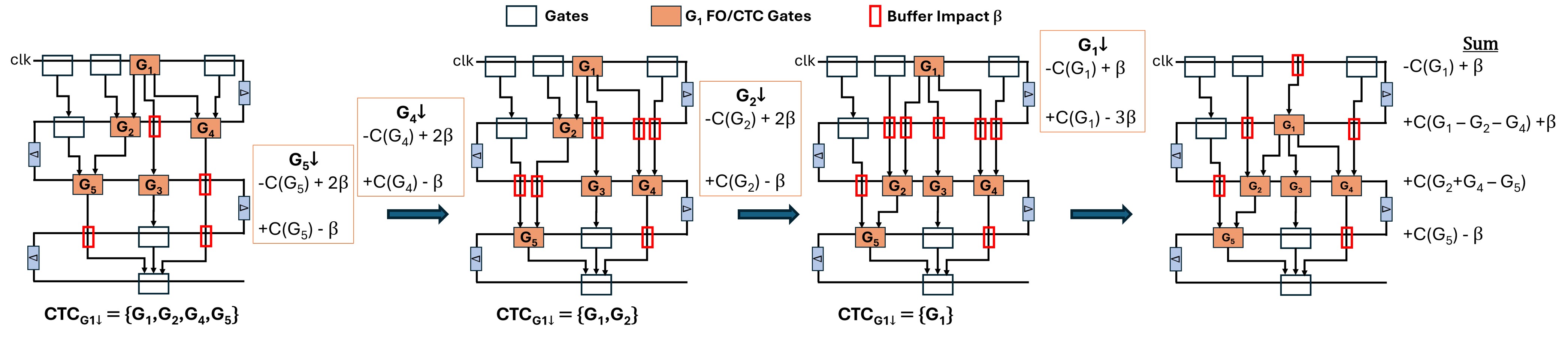}% This is a *.png file
%\vspace{-0.6cm}
\centering
\caption{Decomposition of a downstream gate move ($G_1$) into local impact vectors. Summing the impacts across the critical transitive closure exactly reproduces the global legalization and row-width changes.}\label{fig:Gate_impact}
\end{figure*}

\subsection{Initialization}\label{sec:init}

We begin optimization by constructing a legalized minimum-depth splitter topology and generating an initial row assignment for every gate. The optimization is initialized at the minimum feasible depth, providing the lowest-latency legal implementation while allowing subsequent stages to selectively explore increased circuit depths when they improve the width--depth product.  Fanout constraints are therefore resolved first through splitter insertion, after which multiple minimum-depth row assignments are evaluated to identify the most favorable starting point for width optimization.

In AQFP, minimum-depth splitter tree construction has been extensively studied and can be performed efficiently~\cite{heuristicASP}. We implement the method of~\cite{heuristicASP}, with the modification that path-balancing buffers are not inserted during initialization. Using the resulting splitter topology, we evaluate three alternative row assignment strategies: an as-soon-as-possible schedule ASAP(G), an as-late-as-possible schedule ALAP(G), and a minimum-buffer schedule obtained through linear programming LP(G). Both ALAP and LP schedules are constrained to the minimum depth determined by the ASAP schedule.

The LP formulation is adapted from the minimum-buffer scheduling framework of~\cite{AvilesMultiPhase}. To improve runtime, we constrain the feasible range of each row assignment using ASAP and ALAP bounds. The resulting optimization minimizes the total buffer cost across all edges,

\begin{equation}
\label{eq:lpcost}
\min \sum_{(i,j)\in E} C_{ij}\hspace*{0.15in} \textrm{s.t.}
\end{equation}

subject to

\begin{equation}
\label{eq:cij_cost}
1 \le r_j-r_i \le C_{ij}+1,
\qquad \forall (i,j)\in E,
\end{equation}

\begin{equation}
G_{ASAP}(i)
\le
r_i
\le
G_{ALAP}(i),
\qquad
\forall i \in V,
\end{equation}

Minimizing (\ref{eq:lpcost}) minimizes buffer insertion and consequently JJ count for a fixed splitter topology.
However, in practice, we observe that ASAP and ALAP schedules frequently produce narrower layouts than the minimum-buffer LP solution and lead to better optimization results despite requiring additional buffers. Consequently, rather than selecting the schedule with minimum JJ count, we select the schedule with minimum circuit width to serves as the starting point for the subsequent optimization.

\begin{figure*}[htbp]
\includegraphics[width=\linewidth]{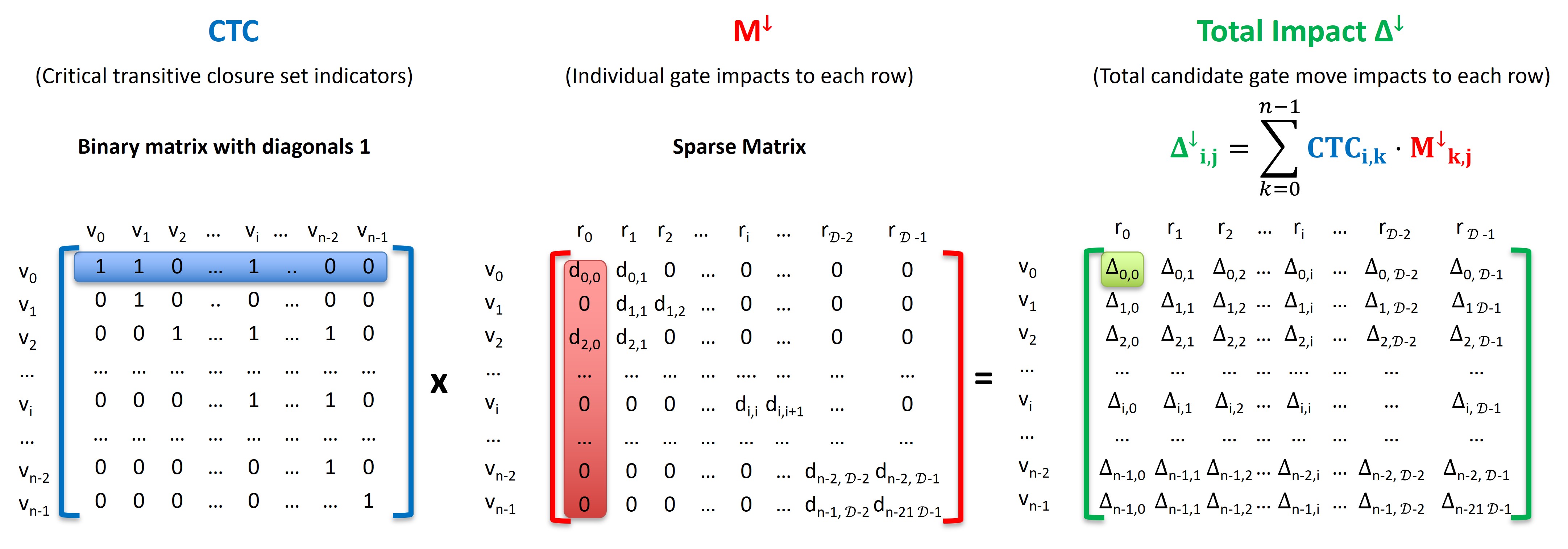}% This is a *.png file
%\vspace{-0.6cm}
\centering
\caption{Matrix formulation for candidate move evaluation. Local gate impacts are accumulated across the critical transitive closure through matrix multiplication to compute the cumulative impact of every downstream (upstream) gate move simultaneously.}\label{fig:matrices}
\end{figure*}

\subsection{Capturing gate move impacts}\label{sec:move_impacts}
During optimization, candidate row assignments are explored through local gate moves, where a gate and any dependent gates are reassigned exactly one row upstream or downstream. To efficiently evaluate the impact of every possible gate move, we develop a framework for calculating move impacts on our row width distribution as a matrix multiplication. 

\subsubsection{Individual gate moves}
The impact of moving a gate $i$ on the row-width vector $\mathbf{W}$ must account for both the reassignment of dependent gates and the resulting changes in buffer occupancy captured by (\ref{eq:row_width}). We refer to the set of gates that must be reassigned during a gate move as the critical transitive closure of $i$. The key observation is that the cumulative impact of a gate move can be decomposed into the sum of the individual impacts of the gates contained in its critical transitive closure.

Moving a gate changes both its own row assignment and the impact of path-balancing buffers on adjacent edges. Consequently, the cost of moving one gate depends on whether neighboring gates also move, introducing dependencies across the critical transitive closure. Concretely, as $i$ is moved downstream a buffer must be added on fanin edges where $i$ moves away from its fanin.  Similarly a buffer must be removed from fanout edges where $i$ moves closer to its fanout $j$.  

To enable exact decomposition of move impacts into local contributions, we encode path-balancing effects directly into the local impact of each gate. Consider a gate $i$ assigned to row $r$, moving $i$ downstream removes width $C(i)$  from row $r$ while increasing the buffer contribution by $|\mathrm{FI}(i)|\beta$ due to newly spanning edges. Simultaneously, row $r+1$ gains width $C(i)$ while reducing the buffer contribution by $|\mathrm{FO}(i)|\beta$. The upstream case is analogous, exchanging fanins and fanouts.

Under this formulation, the local path-balancing buffer costs associated with moving individual gates accumulate to exactly capture the global row-width change induced by a move. Figure~\ref{fig:Gate_impact} illustrates this property for a downstream movement. Gates within the critical transitive closure move together, causing any path-balancing buffers inserted between them to be simultaneously removed and reintroduced. Consequently, the corresponding buffer-width contributions cancel when the local impact vectors are summed. In contrast, edges crossing the boundary of the critical transitive closure experience a net change in path-balancing cost, which is captured entirely by the local impact of the boundary gates, analogous to retiming buffers across the closure.  Unlike classical retiming, however, which relocates registers to optimize timing or register count, the proposed local impact vectors jointly encode the effects of both gate relocation and buffer movement on the layout-aware row-width objective.  As a result, summing the local impact vectors of all gates in the critical transitive closure exactly reproduces the legalization and row-width changes of explicitly moving the entire subgraph, enabling efficient evaluation of all candidate gate moves without graph modification or re-legalization.

We store the impact vectors for all gates in two sparse matrices representing upstream and downstream moves separately, $\mathbf{M}^{\uparrow}$ and $\mathbf{M}^{\downarrow}$, where $\mathbf{M}^{\uparrow}[i,:] = \mathbf{u}_i^{T}$ and $\mathbf{M}^{\downarrow}[i,:] = \mathbf{d}_i^{T}$. Let rows be indexed by $r \in \{0,\ldots,D-1\}$. 
For each gate $i$, define its upstream-impact vector $\mathbf{u}_i \in \mathbb{R}^{D}$ and downstream-impact vector $\mathbf{d}_i \in \mathbb{R}^{D}$ as
\begin{equation}
\label{eq:upstream_impact}
\mathbf{u}_i[r] =
\begin{cases}
+\infty,
&
r_i = 0,
\\[4pt]
-\!C(i) + |\mathrm{FO}(i)|\beta,
&
r = r_i,
\\[4pt]
C(i) - |\mathrm{FI}(i)|\beta,
&
r = r_i - 1,
\\[4pt]
0,
&
\text{otherwise}.
\end{cases}
\end{equation}

\begin{equation}
\label{eq:downstream_impact}
\mathbf{d}_i[r] =
\begin{cases}
+\infty,
&
r_i = \mathcal{D},
\\[4pt]
-\!C(i) + |\mathrm{FI}(i)|\beta,
&
r = r_i,
\\[4pt]
C(i) - |\mathrm{FO}(i)|\beta,
&
r = r_i + 1,
\\[4pt]
0,
&
\text{otherwise}.
\end{cases}
\end{equation}
The $+\infty$ cases enforce depth bounds as any moves that could exceed the bounds will be assigned infinite width and thus be rejected as a solution.
The matrices $\mathbf{M}^{\uparrow/\downarrow}$ encode isolated gate impacts which must be summed across the critical transitive closure.

\subsubsection{Critical closure identification}
The critical fanouts of a gate $i$ are defined as
\begin{equation}
\label{eq:critical_fanout}
\mathrm{CFO}(i)
=
\left\{
j \in \mathrm{FO}(i)
\;\middle|\;
r_j = r_i + 1
\right\}.
\end{equation}

Similarly, the critical fanins are defined as
\begin{equation}
\label{eq:critical_fanins}
\mathrm{CFI}(i)
=
\left\{
j \in \mathrm{FI}(i)
\;\middle|\;
r_i = r_j -  1
\right\}.
\end{equation}
Note that iff $j \in \mathrm{CFO}(i)$ then $i \in \mathrm{CFI}(j)$, thus the critical fanins of all gates can be computed directly from the critical fanouts.

To efficiently evaluate gate moves, we define a critical transitive closure (CTC) matrix. For each gate $i$, a critical closure vector $\mathbf{c}_i$ identifies the set of gates whose row assignments must be modified when $i$ is moved downstream. Formally, $\mathbf{c}_i[j]=1$ if gate $j$ belongs to the critical transitive closure of $i$, and $\mathbf{c}_i[j]=0$ otherwise.
These closure vectors are accumulated in reverse topological order using the critical fanout relation 
\begin{equation}
\label{eq:closure_recursion}
\mathbf{c}_i
=
\mathbf{e}_i
\;\vee\;
\bigvee_{j\in\mathrm{CFO}(i)}
\mathbf{c}_j,
\end{equation} where $\mathbf{e}_i$ is the unit basis vector whose $i$-th entry is one and all others are zero, and $\vee$ denotes element-wise logical OR.  All closure vectors are assembled into the critical transitive closure matrix \textbf{CTC}. 

\begin{equation}
\label{eq:ctc}
\mathbf{CTC}
=
\begin{bmatrix}
\mathbf{c}_0^T
\\
\mathbf{c}_1^T
\\
\vdots
\\
\mathbf{c}_{n-1}^T
\end{bmatrix}.
\end{equation}
Due to the duality between critical fanout and critical fanin dependencies, the transpose $\mathbf{CTC^T}$ directly encodes the corresponding closure relations for upstream moves.

\subsubsection{Matrix operations}
Having stored the local gate impacts in matrices $\mathbf{M}^{\uparrow}$ and $\mathbf{M}^{\downarrow}$ and the corresponding critical transitive closures in $\mathbf{CTC}$, the cumulative impact of every candidate gate move can be computed simultaneously using matrix multiplication. Figure~\ref{fig:matrices} illustrates this transformation, where each row of $\mathbf{CTC}$ identifies the gates participating in a candidate move, and the matrix product accumulates their local impact vectors to produce the corresponding row of $\mathbf{\Delta}^{\uparrow}$ or $\mathbf{\Delta}^{\downarrow}$.

\begin{equation}
\label{eq:delta}
\mathbf{\Delta}^{\uparrow}
=
\mathbf{CTC}^{T}\cdot\mathbf{M}^{\uparrow},
\qquad
\mathbf{\Delta}^{\downarrow}
=
\mathbf{CTC}\cdot\mathbf{M}^{\downarrow}
\end{equation}

Consequently, the cumulative impact matrices $\mathbf{\Delta}^{\uparrow}$ and $\mathbf{\Delta}^{\downarrow}$ are obtained through two global matrix multiplications that simultaneously evaluate the legalization and row-width impacts of every upstream and downstream candidate move.

The local impact matrices $\mathbf{M}^{\uparrow}$ and $\mathbf{M}^{\downarrow}$ each have dimensions $|V|\times\mathcal{D}$ and are constructed in linear time. Since every row contains at most two non-zero entries, they are naturally represented as sparse matrices requiring only $O(|V|)$ storage.

The critical transitive closure matrix $\mathbf{CTC}$, has dimensions $|V|\times|V|$, constructed in $O(|V||E|)$ time by propagating closure vectors across the graph.  However, representing it as a bitmap enables highly efficient vectorized logical OR operations. 

The proposed framework formulates a local impact model targeting row width whose effects compose exactly over critical closures, allowing global impacts of every candidate move to be evaluated in parallel through sparse matrix operations instead of repeated graph legalization. 

Each row of the resulting $\mathbf{\Delta}^{\uparrow}$ and $\mathbf{\Delta}^{\downarrow}$ matrices represents the cumulative change in the row-width vector $\mathbf{W}$ produced by moving a candidate gate upstream or downstream, respectively.  Consequently, evaluating a candidate move requires only adding the corresponding impact vector to the current row-width vector,

\begin{equation}\label{eq:Wi_calc}
\mathbf{W}_i^{\uparrow}
=
\mathbf{W}
+
\mathbf{\Delta}^{\uparrow}[i,:],
\qquad
\mathbf{W}_i^{\downarrow}
=
\mathbf{W}
+
\mathbf{\Delta}^{\downarrow}[i,:].
\end{equation}

The candidate move with the minimum cost is then selected using the scoring mechanism described in the following section which prioritizes reduction in row width as well as density around peak rows.

\subsection{Scoring and Selecting Candidate Moves}\label{sec:scoring}

Once the row-width impacts of all candidate moves have been computed, the remaining challenge is selecting the move most beneficial to final circuit area. While minimizing the maximum row width is the primary objective, candidate moves often produce identical maximum widths or merely redistribute congestion between neighboring rows without improving the overall layout. To distinguish among these plateau solutions, we introduce a critical row-density metric $\rho$ that penalizes congestion surrounding each peak row. This approach is analogous to timing-driven placement, where optimization based solely on the single worst critical path is often insufficient, motivating objectives that account for the sensitivity of near-critical nets during optimization~\cite{Sensitivity_Nets,DREAMPlace4}. 

We define the critical threshold $\tau$ as a value located a fraction $\sigma$ of the way between the ideal width and the current width, where the ideal width $\mathcal{W}^*$ is fixed upon initialization as the average row width obtained under perfectly uniform row utilization. 

\begin{equation}\label{eq:tau}
\tau = \mathcal{W}^* + \sigma(\mathcal{W}- \mathcal{W}^*) 
\end{equation}
\begin{equation}\label{eq:ideal_w}
\mathcal{W}^* = \frac{1}{\mathcal{D}}\sum_r \mathbf{W}[r]
\end{equation}

To penalize critical density we define a weighting vector $\mathbf{\omega_p}$ on \textbf{W} for each peak where $\omega_{max}$ is the maximum row weight and $\delta\omega$ is the linear decay applied per row away from the peak
\begin{equation}
\label{eq:omega_p}
\mathbf{\omega_p}[r] =
\begin{cases}
\max\!\bigl(1,\,
\omega_{\max}
-
\delta\omega\,|p-r|
\bigr),
&
\mathrm{LB}_p \le r \le \mathrm{UB}_p,
\\[4pt]
0,
&
\text{otherwise}.
\end{cases}
\end{equation}
\begin{equation}
\label{eq:lb_p}
\mathrm{LB}_p
=
\min \left\{
r \,\middle|\,
\forall k \in [r,p],\ \mathrm{\textbf{W}}[k] > \tau
\right\}
\end{equation}
\begin{equation}
\label{eq:ub_p}
\mathrm{UB}_p
=
\max \left\{
r \,\middle|\,
\forall k \in [p,r],\ \mathrm{\textbf{W}}[k] > \tau
\right\}
\end{equation}
Using this weighting vector we compute our critical density as 
\begin{equation}\label{eq:crit_density}
\rho=\sum_{p\in P}\boldsymbol{\omega_p^T}
\left(\mathbf{W}-\tau\mathbf{1}\right)^{2}
\end{equation} where the square operation is performed element-wise.

Figure~\ref{fig:rho} illustrates the computation of the critical density metric for an example row-width distribution with $\omega_{\max}=20$ and $\delta\omega=2$. Only rows whose widths exceed the critical threshold $\tau$ contribute to the metric, defining the congestion cluster surrounding each peak row. Within this region, rows nearer the peak are assigned larger weights, encouraging optimization to reduce congestion locally rather than simply shifting it between neighboring rows. At the same time, the quadratic penalty ensures that significantly over-utilized rows continue to dominate the objective even if they are farther from the peak. Consequently, the proposed metric balances proximity to peak rows with the severity of local congestion.

\begin{figure*}[htbp]
\includegraphics[width=\linewidth]{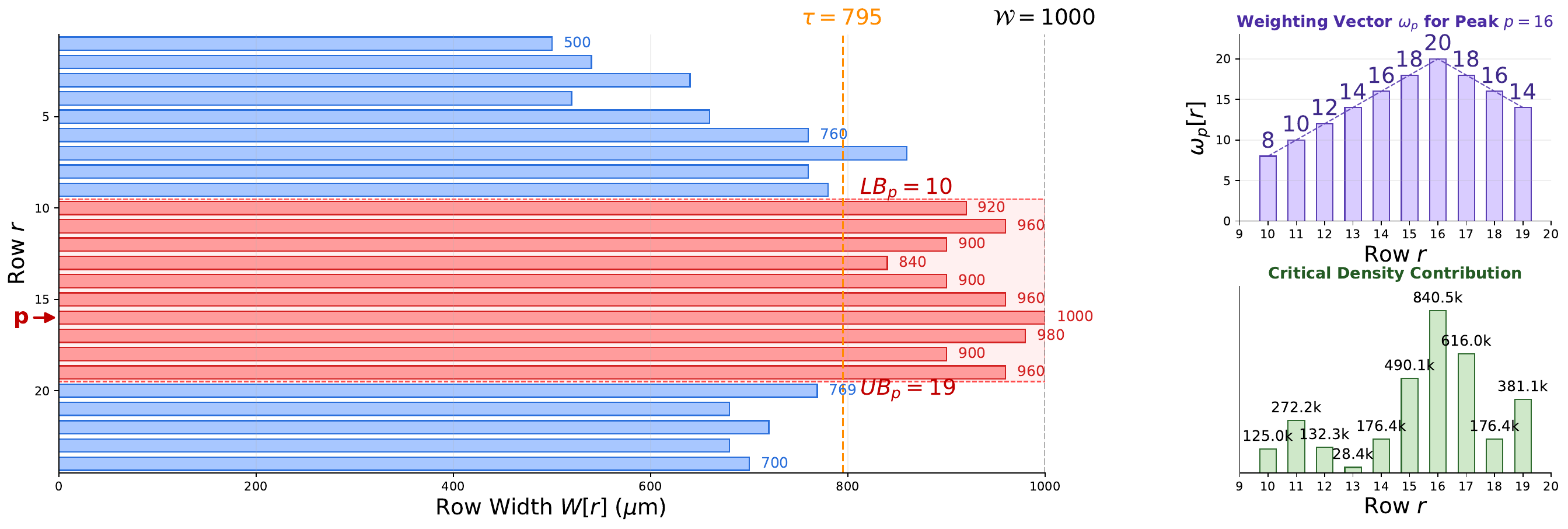}% This is a *.png file
%\vspace{-0.6cm}
\centering
\caption{Illustration of the proposed critical density metric. Row contributions are computed relative to the critical threshold $\tau$ using linearly decaying weights centered around the peak row p=16.}\label{fig:rho}
\end{figure*}

The critical density metric is incorporated into the candidate move evaluation through a lexicographic scoring function. Each candidate gate move is evaluated by computing the resulting cost tuples
\[
\langle\mathcal{W}_i^{\uparrow}, |\mathrm{P}_i^{\uparrow}|, \rho_i^{\uparrow}\rangle
\quad\text{and}\quad
\langle\mathcal{W}_i^{\downarrow}, |\mathrm{P}_i^{\downarrow}|, \rho_i^{\downarrow}\rangle
\]

The lexicographically minimum tuple determines the selected gate move and direction, prioritizing reductions in maximum row width, followed by the number of peak rows $|P|$, and finally the critical density $\rho$. Once the resulting row-width vector $\mathbf{W}_i$ has been computed for each candidate move, all move scores can be evaluated in $O(|V||P|\mathcal{D})$ time, since evaluating $\rho$ requires summing across the $\mathcal{D}$-dimensional row-width vector for each peak row. In practice, $|P|$ is typically small, although in the worst case every row may be a peak row, yielding a complexity of $O(|V|\mathcal{D}^2)$.

Combining move-impact computation with candidate scoring results in an overall worst-case complexity of $O(|V||E|+|V|\mathcal{D}^2)$. However, because $|P|$ is typically small and $\mathcal{D}\ll|E|$, the runtime is dominated in practice by the $O(|V||E|)$ move-impact computation. The proposed scoring function enables efficient evaluation of all candidate moves while preserving width reduction as the primary optimization objective, using critical density to prioritize among candidate moves with comparable width by favoring those that disperse local congestion and create greater opportunity for subsequent width reduction.

\subsection{Depth Exploration}\label{sec:depth_increase}

Introducing an additional row increases row-assignment flexibility by allowing gates on previously critical paths to become movable. Accordingly, after width optimization converges at a given depth, we explore solutions with progressively increasing depth.

A simple approach could insert an empty row at an arbitrary location. However, because width reduction is primarily driven by gate movement around peak rows, we use a lightweight heuristic to place the additional row near regions where congestion is most likely to be reduced. Specifically, we estimate whether the dominant peak row is more likely to benefit from upstream or downstream gate movements.

Moving a gate $v$ upstream introduces $|\mathrm{FO}(v)|$ path-balancing buffers, while moving it downstream introduces $|\mathrm{FI}(v)|$ buffers. Consequently, we define a directional bias for peak row $p$ as

\begin{equation}\label{eq:down}
\mathrm{Dir\_bias}=\sum_{v \in V_{r_{\mathrm{p}}}}\left(\left|\mathrm{FO}(v)\right|-\left|\mathrm{FI}(v)\right|\right)
\end{equation}

Intuitively, splitter-dominated rows tend to favor downstream movement, whereas logic-dominated rows tend to favor upstream movement. The directional bias therefore provides a simple estimate of which side of the peak is likely to offer greater row-assignment flexibility.

Using this estimate, the empty row is inserted immediately outside the critical region surrounding the dominant peak:
\begin{equation}\label{eq:down_end}
r\_{\mathrm{empty}} =
\begin{cases}
\min(\mathrm{UB_p} + 1, \mathcal{D}-1), & \text{if } \mathrm{Dir\_bias} > 0, \\
\max(\mathrm{LB_p} - 1,0), & \text{otherwise.}
\end{cases}
\end{equation}

All gates assigned to row $r\_{\mathrm{empty}}$, together with their critical transitive fanouts, are then moved one row downstream, leaving an empty row at $r\_{\mathrm{empty}}$. The resulting increase in depth provides additional flexibility for subsequent width-optimization iterations while preserving legality.

\subsection{Cohesive Flow}

The proposed framework combines the preceding formulations into a unified optimization algorithm that jointly explores row assignments, splitter topology, and circuit depth. Rather than optimizing these components independently, the framework alternates between localized improvements and global depth exploration, progressively refining the solution while maintaining a legal AQFP design.  Algorithm~\ref{alg:layout_aware_optimization} summarizes this flow, which is organized as three nested optimization loops.

Lines 1--5 perform the initialization described in Section~\ref{sec:init}, constructing a minimum-depth legalized splitter topology and selecting the minimum-width row assignment from the ASAP, ALAP, and LP initializations.

Lines 6--39 implement the depth exploration procedure of Section~\ref{sec:depth_increase}. At each explored depth, the current row-width vector and optimization score are initialized (Lines 8--10). The middle loop alternates between splitter-tree reconstruction and gate-move optimization, while the inner loop repeatedly applies the minimum-cost gate move at the current splitter topology.

Within the fixed-depth optimization, Lines 17--23 compute the matrix-based move impacts of Section~\ref{sec:move_impacts}, while Lines 24--30 rank all candidate moves using the lexicographic scoring procedure of Section~\ref{sec:scoring} and apply the minimum-cost move. When no improving move remains, the splitter topology is rebuilt (Line 32) and fixed-depth optimization resumes under the updated topology. The fixed-depth optimization terminates only when neither additional gate moves nor splitter-tree reconstruction produces further improvement.

After convergence at the current depth, the best solution is retained if its width--depth product improves upon the current optimum (Lines 34--37), after which an empty row is inserted to increase the explored circuit depth (Line 39).  Depth exploration terminates after the maximum allowable depth is reached or after $\delta$ non-improving depth expansions, at which point path-balancing buffers are inserted to complete AQFP legalization before passing the design to downstream physical design tools. The parameter $\delta$ prevents unnecessary depth exploration once no further improvement is observed.

Throughout optimization, every accepted move strictly reduces the global optimization objective while maintaining a legal AQFP row assignment. The search therefore remains monotonic, preserves a feasible solution at every iteration, and terminates after a finite number of improving moves at each explored depth. Together with the global matrix-based evaluation of candidate moves, these properties enable efficient exploration of large optimization spaces, allowing the proposed framework to scale to circuits substantially larger than current AQFP fabrication capabilities, as demonstrated in Section~\ref{sec:scalablity}.

\begin{algorithm}[htbp]
\caption{LAYOUT\_AWARE\_OPTIMIZATION$(\mathrm{G},\mathrm{S},\mathrm{W},\delta)$}
\label{alg:layout_aware_optimization}
\begin{algorithmic}[1]

\Statex \textbf{Input:}
\Statex $\mathrm{G}=(\mathrm{V},\mathrm{E})$: logical graph representation of the circuit
\Statex $\mathrm{S}$: maximum splitter fanout
\Statex $\mathrm{C}$: Cell library mapping from vertex $\mathrm{v} \in \mathrm{V}$ to cell width
\Statex $\mathcal{D}_{\max}$: maximum circuit depth
\Statex $\delta$:  Depth exploration without benefit before termination

\Statex \textbf{Output:}
\Statex $\mathrm{G}'$: AQFP legalized design with minimized circuit width--depth product

\State \textit{Initialize: minimum-depth legalization}
\State $\mathrm{G} \gets \mathrm{Insert\_Splitters}(\mathrm{G},\mathrm{S})$ \Comment{legalize fanout}

\State $\mathrm{G} \gets \mathrm{MinArea}(\mathrm{ASAP}(\mathrm{G}), \mathrm{ALAP}(\mathrm{G}), \mathrm{LP}(\mathrm{G}))$
\State $\mathrm{G}_{\mathrm{best}} \gets \mathrm{G}$
\State $i \gets 0$
\State \textit{Outer loop: explore depth}
\While{$i \leq \delta$, $\mathcal{D}\leq\mathcal{D}_{\max}$}
    \State Compute \textbf{W}\Comment{Eq.~\ref{eq:row_width}}
    \State Fix ideal width $\mathcal{W^*}$\Comment{Eq.~\ref{eq:ideal_w}}
    \State $\mathrm{Best\_Score} \gets \langle \mathcal{W}, |P|, \rho \rangle$ \Comment{Eq.~\ref{eq:crit_density}}

    \State $\mathrm{rebuild\_splitters} \gets 1$
    \State \textit{Middle loop: adjust splitter topology}
    \While{$\mathrm{rebuild\_splitters}$}
        
        \State $\mathrm{rebuild\_splitters} \gets 0$
        \State $\mathrm{improved} \gets 1$
        \State \textit{Inner loop: reduce circuit width}
        \While{$\mathrm{improved}$}
            \State $\mathrm{improved} \gets 0$
            \State $\mathrm{Populate}(\mathbf{CTC}, \mathbf{M}^{\uparrow}, \mathbf{M}^{\downarrow})$\Comment{Eqs.~\ref{eq:upstream_impact}--\ref{eq:ctc}}
            \State $\mathbf{\Delta}^{\uparrow} \gets \mathbf{CTC}^{T}\times\mathbf{M}^{\uparrow}$
            \State $\mathbf{\Delta}^{\downarrow} \gets \mathbf{CTC}\times\mathbf{M}^{\downarrow}$
            \State $\mathbf{W}_i^{\uparrow}=\mathbf{W}+\mathbf{\Delta}^{\uparrow}[i,:]$\Comment{Eq.~\ref{eq:Wi_calc}}
            \State $\mathbf{W}_i^{\downarrow}=\mathbf{W}+\mathbf{\Delta}^{\downarrow}[i,:]$\Comment{Eq.~\ref{eq:Wi_calc}}
            \State $(\mathrm{i}^{\star}, \mathrm{dir}^{\star}, \mathrm{Score}^{\star}) \gets \mathrm{Best\_Move}(\mathbf{W}^\uparrow, \mathbf{W}^\downarrow)$

            \If{$\mathrm{Score}^{\star} \prec \mathrm{Best\_Score}$}
                \State $\mathrm{Best\_Score} \gets \mathrm{Score}^{\star}$
                \State $\mathrm{G} \gets \mathrm{Apply\_Move}(\mathrm{G}, \mathrm{i}^{\star}, \mathrm{dir}^{\star})$
                \State $\mathrm{improved} \gets 1$
                \State $\mathrm{rebuild\_splitters} \gets 1$
            \EndIf
        \EndWhile
        \State $\mathrm{G} \gets \mathrm{Rebuild\_All\_Splitters}(\mathrm{G})$
    \EndWhile
    \If{$\mathcal{D}\cdot \mathcal{W} < \mathcal{D}_{\mathrm{best}}\cdot \mathcal{W}_{\mathrm{best}}$}
        \State $\mathrm{G}_{\mathrm{best}} \gets \mathrm{G}$
        \State $\delta \gets \delta +1$
    \EndIf

    \State $i \gets i + 1$
    \State $\mathcal{D}+1$, Empty row insertion \Comment{Eq.~\ref{eq:down_end}}
\EndWhile

\State $\mathrm{G'} \gets \mathrm{Insert\_Path\_Balancing\_Buffers}(\mathrm{G}_{\mathrm{best}})$
\State \Return $\mathrm{G'}$

\end{algorithmic}
\end{algorithm}

\section{Results}

We compare against the state-of-the-art AQFP buffer and splitter insertion framework from TCAD'25~\cite{BS_TCAD_Lee}. Notably, two legalization frameworks were published concurrently in TCAD'25~\cite{BS_TCAD_Lee,BS_TCAD_Fu}, achieving within 0.2\% average JJ count of one another across standard benchmarks. We select~\cite{BS_TCAD_Lee} as our primary baseline because it publicly released its legalized netlists, enabling direct and reproducible comparisons.

To achieve direct comparisons, our optimization begins from the legalized netlists of~\cite{BS_TCAD_Lee} after removing all inserted buffers and splitters, thereby preserving identical logical starting points. Both the baseline netlists from~\cite{BS_TCAD_Lee} and the optimized netlists produced by our flow are then passed through the same physical design stage using an identical AQFP cell library (Table~\ref{tab:celllib}) and placement methodology~\cite{AQFPPlacement}, ensuring that all reported differences arise solely from technology legalization.  To facilitate reproducibility and future comparisons, all legalized AQFP benchmark netlists generated in this work are publicly available~\cite{LayoutAwareAQFPGitHub}.  In our experiments move scoring hyper parameters are set as $\omega_{\max}=20, \delta\omega = 2, \sigma=60\%$.  Runtimes for the proposed framework are reported on an AMD EPYC 7763 Milan processor.  

\begin{table}[htbp]
\centering
\caption{AQFP cell library used for all designs.}
\label{tab:celllib}
\begin{tabular}{lccc}
\toprule
\textbf{Cell Type} &
\textbf{JJs} &
\textbf{Width ($\mu$m)} &
\textbf{Max. Interconnect ($\mu$m)}\\
\midrule
Logic Gate & 6 & $60\times60$ & 300\\
Buffer     & 2 & $20\times40$ & 800\\
Splitter-2 & 2 & $40\times60$ & 200\\
Splitter-3 & 2 & $60\times60$ & 200\\
Splitter-4 & 2 & $80\times60$ & 200\\
\bottomrule
\end{tabular}
\end{table}

\begin{table*}[t]
\centering
\caption{Comparison of JJ-minimized TCAD'25 designs and layout-aware optimization across benchmark circuits, including runtime.}
\label{tab:layout_depth_compare_runtime}
\scriptsize
\setlength{\tabcolsep}{3pt}
\renewcommand{\arraystretch}{1.08}
\begin{adjustbox}{max width=\textwidth}
\begin{tabular}{l c c c c c c c c c c c c c c c c}
\toprule
\multirow{2}{*}{\textbf{Benchmarks}} &
\multirow{2}{*}{\textbf{Gates}} &
\multirow{2}{*}{\makecell{\textbf{Min.}\\\textbf{Depth}}} &
\multicolumn{3}{c}{\textbf{JJ Min--TCAD'25~\cite{BS_TCAD_Lee}}} &
\multicolumn{5}{c}{\textbf{Layout-Aware Depth Min}} &
\multicolumn{6}{c}{\textbf{Layout-Aware Depth Optimized}} \\
\cmidrule(lr){4-6} \cmidrule(lr){7-11} \cmidrule(lr){12-17}
& & &
\textbf{JJs} & \textbf{Width} & \textbf{Runtime(s)} &
\textbf{JJs} & \textbf{Width} & \makecell{\textbf{$\mathcal{WD}$}\\\textbf{Ratio*}} & \textbf{JJ Ratio} & \textbf{Runtime(s)} &
\textbf{JJs} & \textbf{Depth} & \textbf{Width} & \makecell{\textbf{$\mathcal{WD}$}\\\textbf{Ratio*}} & \textbf{JJ Ratio} & \textbf{Runtime(s)} \\
\midrule
\texttt{adder1}    & 7    & 8   & 74    & 140   & 0.00& 78    & 120   & 0.86 & 1.05 & 0.00& 78    & 8   & 120   & 0.86 & 1.05 & 0.00\\
\texttt{adder8}    & 77   & 33  & 1,204  & 980   & 0.01& 1,556  & 560   & 0.57 & 1.29 & 0.01& 1,562  & 35  & 520   & 0.56 & 1.30 & 0.01\\
\texttt{mult8}     & 439  & 70  & 6,014  & 2,380  & 0.18& 7,962  & 1,600  & 0.67 & 1.32 & 0.06& 7,828  & 74  & 1,440  & 0.64 & 1.30 & 0.16\\
\texttt{counter16} & 29   & 17  & 304   & 480   & 0.00& 358   & 440   & 0.92 & 1.18 & 0.00& 378   & 18  & 360   & 0.79 & 1.24 & 0.00\\
\texttt{counter32} & 82   & 23  & 800   & 960   & 0.01& 912   & 880   & 0.92 & 1.14 & 0.01& 942   & 24  & 720   & 0.78 & 1.18 & 0.01\\
\texttt{counter64} & 195  & 30  & 1,864  & 1,920  & 0.02& 2,114  & 1,760  & 0.92 & 1.13 & 0.01& 2,172  & 31  & 1,440  & 0.78 & 1.17 & 0.03\\
\texttt{counter128}& 428  & 38  & 4,062  & 3,840  & 0.07& 4,680  & 3,520  & 0.92 & 1.15 & 0.03& 4,778  & 39  & 2,880  & 0.77 & 1.18 & 0.11\\
\texttt{c17}       & 6    & 5   & 60    & 200   & 0   & 62    & 160   & 0.80 & 1.03 & 0.00& 62& 5   & 160   & 0.80 & 1.03 & 0.00\\
\texttt{c432}      & 121  & 37  & 2,404  & 1,080  & 0.02& 2,452  & 1,060  & 0.98 & 1.02 & 0.01& 2,564  & 39  & 980   & 0.96 & 1.07 & 0.01\\
\texttt{c499}      & 387  & 29  & 4,668  & 3,740  & 0.09& 4,704  & 3,300  & 0.88 & 1.01 & 0.03& 5,088  & 32  & 2,520  & 0.74 & 1.09 & 0.09\\
\texttt{c880}      & 306  & 40  & 4,858  & 2,460  & 0.15& 5,528  & 1,680  & 0.68 & 1.14 & 0.06& 5,528  & 40  & 1,680  & 0.68 & 1.14 & 0.07\\
\texttt{c1355}     & 389  & 29  & 4,702  & 4,120  & 0.06& 4,782  & 3,340  & 0.81 & 1.02 & 0.04& 5,544  & 36  & 2,120  & 0.64 & 1.18 & 0.11\\
\texttt{c1908}     & 289  & 34  & 4,202  & 2,340  & 0.09& 4,298  & 1,920  & 0.82 & 1.02 & 0.02& 4,584  & 37  & 1,640  & 0.76 & 1.09 & 0.07\\
\texttt{c2670}     & 368  & 28  & 6,032  & 3,920  & 0.32& 7,664  & 3,500  & 0.89 & 1.27 & 0.04& 7,766  & 29  & 3,200  & 0.85 & 1.29 & 0.04\\
\texttt{c3540}     & 794  & 52  & 8,650  & 7,400  & 0.81& 11,982 & 3,320  & 0.45 & 1.39 & 0.23& 12,290 & 54  & 3,160  & 0.44 & 1.42 & 0.45\\
\texttt{c5315}     & 1,302 & 40  & 19,092 & 13,840 & 2.06& 21,442 & 6580  & 0.48 & 1.12 & 0.66& 21,664 & 41  & 6,340  & 0.47 & 1.13 & 0.96\\
\texttt{c6288}     & 1,870 & 179 & 28,514 & 3,340  & 2.56& 29,120 & 2,640  & 0.79 & 1.02 & 0.28& 29,558 & 181 & 2,400  & 0.73 & 1.04 & 1.21\\
\texttt{c7552}     & 1,394 & 56  & 23,238 & 9,540  & 4.20& 28,904 & 6,740  & 0.71 & 1.24 & 0.93& 28,904& 56& 6,740& 0.71 & 1.24& 1.26\\
\texttt{sorter32}  & 480  & 30  & 3,840  & 1,920  & 0.06& 3,840  & 1,920  & 1.00 & 1.00 & 0.03& 4,028  & 32  & 1,600  & 0.89 & 1.05 & 0.11\\
\texttt{sorter48}  & 880  & 35  & 7,040  & 3,520  & 0.20& 7,200  & 3,520  & 1.00 & 1.02 & 0.05& 7,910  & 40  & 2,640  & 0.86 & 1.12 & 0.34\\
\texttt{alu32}     & 1,513 & 169 & 36,750 & 9,340  & 2.74& 42,660 & 4,460  & 0.48 & 1.16 & 3.39& 43,224 & 171 & 4,100  & 0.44 & 1.18 & 6.11\\
\midrule
\multicolumn{3}{l}{\textbf{Average Results \& Total Runtime}} &
\multicolumn{2}{c}{} &
\multicolumn{1}{c}{\textbf{13.65}} &
\multicolumn{2}{c}{} &
\textbf{0.79} & \textbf{1.13} &
\multicolumn{1}{c}{\textbf{5.89}} &
\multicolumn{3}{c}{} &
\textbf{0.72} & \textbf{1.17} &
\multicolumn{1}{c}{\textbf{11.15}} \\
\bottomrule
\end{tabular}
\end{adjustbox}
\caption*{\footnotesize\raggedright *Ratio values are computed as Layout-Aware / TCAD'25. Ratios below 1 indicate lower width--depth products and improved layout quality.}
\end{table*}

\subsection{Post-Synthesis Comparison}

Table~\ref{tab:layout_depth_compare_runtime} compares the proposed layout-aware optimization against the state-of-the-art AQFP legalization framework from TCAD'25 ~\cite{BS_TCAD_Lee}. Under a minimum-depth constraint, the proposed optimization reduces the circuit width--depth product by an average of 21\%, exceeding a $2\times$ improvement for \texttt{c3540}, \texttt{c5315}, and \texttt{alu32}. Enabling depth exploration further improves the average reduction to 28\%, demonstrating that modest increases in circuit depth provide additional flexibility for reducing row congestion.  Because the proposed objective differs fundamentally from minimum-JJ legalization, the initialization stage selects the minimum-area legalization from ASAP, ALAP, and minimum-buffer LP schedules. Across the benchmark suite, the LP initialization was selected for only 8 of the 21 circuits, validating the inclusion of additional initialization approaches. 

Despite optimizing a complex layout-aware objective, the proposed framework achieves runtimes comparable to, and frequently lower than, the legalization algorithm of TCAD'25. These results demonstrate that the substantial improvements from the proposed layout-aware objective can be achieved without sacrificing computational efficiency.

\subsection{Post-Placement Comparison}

To validate the proposed optimization objective, we compare the resulting legalized netlists after physical design using the placement method 
from~\cite{AQFPPlacement} for all benchmark circuits in Table~\ref{tab:placement_results}. The average post-synthesis reduction in width--depth product of 28\% closely predicts the measured average post-placement area reduction of 30\%, validating the proposed width--depth product as an accurate post-synthesis proxy for physical design cost.  

Notably, the placement stage frequently inserts additional rows of buffers to satisfy maximum interconnect length constraints, increasing both circuit depth and JJ count. By reducing congestion around highly utilized rows (Fig.~\ref{fig:c3540_dist}), the proposed layout-aware optimization significantly alleviates these physical design challenges. As a result, layout-aware circuits achieve an average 2\% reduction in circuit depth, and 9 benchmark circuits require fewer total JJs than the TCAD'25 baseline~\cite{BS_TCAD_Lee}, despite our optimization targeting area rather than junction count or depth. While the post-synthesis results indicate an average JJ overhead of 17\%, this overhead is reduced to only 3\% after placement due to the improved row distribution requiring fewer additional buffers than~\cite{BS_TCAD_Lee} during placement.  Accordingly, our flow achieves on average a 30\% reduction in area for only a 3\% increase in JJ count. 

\begin{table}[htbp]
\centering
\caption{Placement results using layout-aware depth optimized netlists compared to placement results using JJ-minimized netlists.}
\label{tab:placement_results}
\scriptsize
\setlength{\tabcolsep}{3pt}
\renewcommand{\arraystretch}{1.08}
\begin{adjustbox}{max width=\columnwidth}
\begin{tabular}{l ccc ccc cc}
\toprule
\multirow{3}{*}{\textbf{Benchmark}} &
\multicolumn{8}{c}{\textbf{Placed Results~\cite{AQFPPlacement}}}\\
\cmidrule(lr){2-9}
&
\multicolumn{3}{c}{\textbf{JJ-Minimized}} &
\multicolumn{3}{c}{\textbf{Layout-Aware}} &
\multicolumn{2}{c}{\multirow{2}{*}{\shortstack{\textbf{Layout-Aware}\\[-1pt]\textbf{/ JJ-Minimized}}}}\\
&
\multicolumn{3}{c}{\scriptsize TCAD'25~\cite{BS_TCAD_Lee}} &
\multicolumn{3}{c}{\scriptsize This Work} &
&
\\
\cmidrule(lr){2-4}\cmidrule(lr){5-7}\cmidrule(lr){8-9}
&
\textbf{Depth} &
\textbf{JJs} &
\textbf{Area ($mm^2$)} &
\textbf{Depth} &
\textbf{JJs} &
\textbf{Area ($mm^2$)} &
\textbf{JJs} &
\textbf{Area}\\
\midrule
\texttt{adder1}    & 8   & 74      & 0.120  & 8   & 78      & 0.111  & 1.05 & 0.93 \\
\texttt{adder8}    & 37  & 1,400   & 3.68  & 37  & 1,642   & 2.22  & 1.17 & 0.60 \\
\texttt{mult8}     & 123 & 11,520  & 29.42 & 128 & 13,194  & 19.55 & 1.15 & 0.66 \\
\texttt{counter16} & 20  & 362      & 0.99  & 20  & 416      & 0.82  & 1.15 & 0.83 \\
\texttt{counter32} & 25  & 900      & 2.46  & 28  & 1,058   & 2.06  & 1.18 & 0.84 \\
\texttt{counter64} & 47  & 2,846   & 9.14  & 46  & 3,128   & 6.85  & 1.10 & 0.75 \\
\texttt{counter128}& 92  & 10,008  & 35.56 & 80  & 9,516   & 23.78 & 0.95 & 0.67 \\
\texttt{c17}       & 5   & 60       & 0.112  & 5   & 62       & 0.106  & 1.03 & 0.95 \\
\texttt{c432}      & 56  & 3,530   & 6.45  & 52  & 3,434   & 5.21  & 0.97 & 0.81 \\
\texttt{c499}      & 74  & 11,386  & 28.05 & 80  & 12,098  & 20.79 & 1.06 & 0.74 \\
\texttt{c880}      & 94  & 12,130  & 24.03 & 77  & 10,192  & 13.74 & 0.84 & 0.57 \\
\texttt{c1355}     & 89  & 14,288  & 37.18 & 84  & 12,354  & 18.53 & 0.86 & 0.50 \\
\texttt{c1908}     & 77  & 8,936   & 18.39 & 70  & 8,230   & 12.00 & 0.92 & 0.65 \\
\texttt{c2670}     & 99  & 21,488  & 39.04 & 84  & 20,248  & 28.17 & 0.94 & 0.72 \\
\texttt{c3540}     & 211 & 45,468  & 156.58& 188 & 40,222  & 60.92 & 0.88 & 0.39 \\
\texttt{c5315}     & 276 & 129,022 & 383.37& 299 & 142,468 & 192.34& 1.10 & 0.50 \\
\texttt{c6288}     & 507 & 83,188  & 169.54& 483 & 80,344  & 118.97& 0.97 & 0.70 \\
\texttt{c7552}     & 295 & 126,212 & 282.00& 311 & 150,632 & 212.51& 1.19 & 0.75 \\
\texttt{sorter32}  & 63  & 7,104   & 12.66 & 63  & 7,032   & 10.62 & 0.99 & 0.84 \\
\texttt{sorter48}  & 120 & 20,864  & 43.30 & 121 & 20,934  & 32.95& 1.00 & 0.76 \\
\texttt{alu32}     & 471 & 130,762 & 440.95& 510 & 138,448 & 212.41& 1.06 & 0.48 \\
\midrule
\multicolumn{7}{r}{\textbf{Average}} &
\textbf{1.03} & \textbf{0.70} \\
\bottomrule
\end{tabular}
\end{adjustbox}
\end{table}

\subsubsection{JJ Impacts on Power}
This modest increase in JJ count has negligible impact on total AQFP energy consumption. Each additional Josephson junction contributes approximately 5~zJ of energy~\cite{cooling_overhead}. Even in the worst case, the additional energy overhead remains negligible. 

The largest increase in power occurs for \texttt{c7552}, which achieves a 69.5~mm$^2$ reduction in area at the cost of 24,420 additional JJs, corresponding to only 0.61~$\mu$W of additional power at 5~GHz. Conversely, the largest area reduction is obtained for \texttt{alu32}, where layout-aware optimization reduces area by 228.5~mm$^2$ while increasing power at 5~GHz by only 0.19~$\mu$W due to 7,686 additional JJs. These results demonstrate that modest increases in JJ count incur negligible energy overhead while enabling substantial reductions in physical area.

\subsection{Runtime Scalability}\label{sec:scalablity}

Although the proposed framework optimizes a more complex objective than prior AQFP legalization approaches, its runtime remains practical for both current and future AQFP designs. The largest fabricated AQFP circuits contain approximately 20,000 JJs~\cite{MANA,JJ20K}, whereas the largest benchmarks evaluated in Tables~\ref{tab:layout_depth_compare_runtime} \&~\ref{tab:placement_results} exceed current fabrication limits by more than $6\times$. Across these benchmarks, the longest runtime is only 6.11~s for \texttt{alu32}, which remains negligible compared to the subsequent physical design stage, whose runtime typically exceeds one hour for this circuit~\cite{AQFPPlacement}.

To evaluate scalability on substantially larger circuits, Table~\ref{tab:runtime_scalability} reports runtimes for representative EPFL benchmarks ranging from 5,000 to nearly 48,000 logic gates. For minimum-depth targeted optimization, legalization completes in less than three minutes even for the largest design. Allowing depth optimization increases the maximum runtime to only 318.2~s for the 47,662-gate \texttt{mem\_ctrl} benchmark.

These results demonstrate that layout-aware AQFP legalization scales efficiently to circuit sizes well beyond those currently manufacturable. In particular, the largest benchmark, \texttt{mem\_ctrl}, contained 921,434 Josephson junctions (JJs), over $46\times$ larger than the largest AQFP circuits fabricated to date~\cite{MANA,JJ20K}. This indicates that the proposed framework can support substantial future increases in AQFP integration density without introducing a significant computational bottleneck into the design flow.

\begin{table}[htbp]
\centering
\caption{Runtime results across large circuits. Runtime is reported in seconds.}
\label{tab:runtime_scalability}
\scriptsize
\setlength{\tabcolsep}{3pt}
\renewcommand{\arraystretch}{1.08}
\begin{adjustbox}{max width=\columnwidth}
\begin{tabular}{l c ccc ccc}
\toprule
\multirow{2}{*}{\textbf{Benchmark}} &
\multirow{2}{*}{\textbf{Gates}} &
\multicolumn{3}{c}{\textbf{Minimum Depth}} &
\multicolumn{3}{c}{\textbf{Depth Optimized}} \\
\cmidrule(lr){3-5} \cmidrule(lr){6-8}
&
&
\textbf{Width ($\mu m$)} &
\textbf{Depth} &
\textbf{Runtime (s)} &
\textbf{Width ($\mu m$)} &
\textbf{Depth} &
\textbf{Runtime (s)} \\
\midrule
\texttt{sin}         & 5,218  & 12,260  & 226 & 1.2   & 10,840  & 233 & 7.6   \\
\texttt{arbiter}     & 12,204 & 27,660  & 90  & 0.5   & 23,820  & 91  & 22.5  \\
\texttt{voter}       & 13,922 & 42,320  & 113 & 7.9   & 42,160  & 114 & 11.2  \\
\texttt{square}      & 17,523 & 23,320  & 257 & 31.5  & 22,300  & 261 & 45.7  \\
\texttt{multiplier}  & 24,817 & 18,340  & 282 & 45.2  & 18,320  & 283 & 57.2  \\
\texttt{log2}        & 29,828 & 33,380  & 422 & 33.8  & 33,060  & 424 & 47.0  \\
\texttt{mem\_ctrl}   & 47,662 & 175,700 & 169 & 164.7 & 160,080 & 170 & 318.2 \\
\bottomrule
\end{tabular}
\end{adjustbox}
\end{table}

\subsection{Flow Integration Validation}

To validate compatibility with existing AQFP design flows, the proposed layout-aware technology legalization has been integrated into the complete AQFP RTL-to-GDSII EDA flow provided by qPALACE~\cite{qPALACE}. Following technology legalization, all benchmark circuits undergo placement, routing, timing verification, and GDSII generation, with functional correctness verified through both Verilog simulation and JoSIM superconducting circuit simulation.

Figure~\ref{fig:GDS} presents the resulting GDSII layouts for \texttt{adder8} generated using the proposed layout-aware legalization and the JJ-minimization legalization of~\cite{BS_TCAD_Lee}. The layouts visually illustrate the motivation behind the proposed optimization. The JJ-minimized design exhibits a highly concentrated placement row that determines the overall die width, increasing total area despite the downstream logic remaining sparsely populated. In contrast, the proposed legalization distributes logic more uniformly across placement rows, reducing peak row width and producing a more compact physical implementation. Although this redistribution incurs a moderate increase of 242 Josephson junctions, it reduces the die size from $2\,\text{mm} \times 4.3\,\text{mm}$ to $1.3\,\text{mm} \times 4\,\text{mm}$, corresponding to a 40\% reduction in die area. Both layouts successfully achieved timing closure at 5 GHz. These results confirm that optimizing the legalization objective for physical layout, rather than junction count alone, translates directly into improved area throughout the complete RTL-to-GDSII design flow.

\begin{figure}[htbp]
\includegraphics[width=0.95\columnwidth]{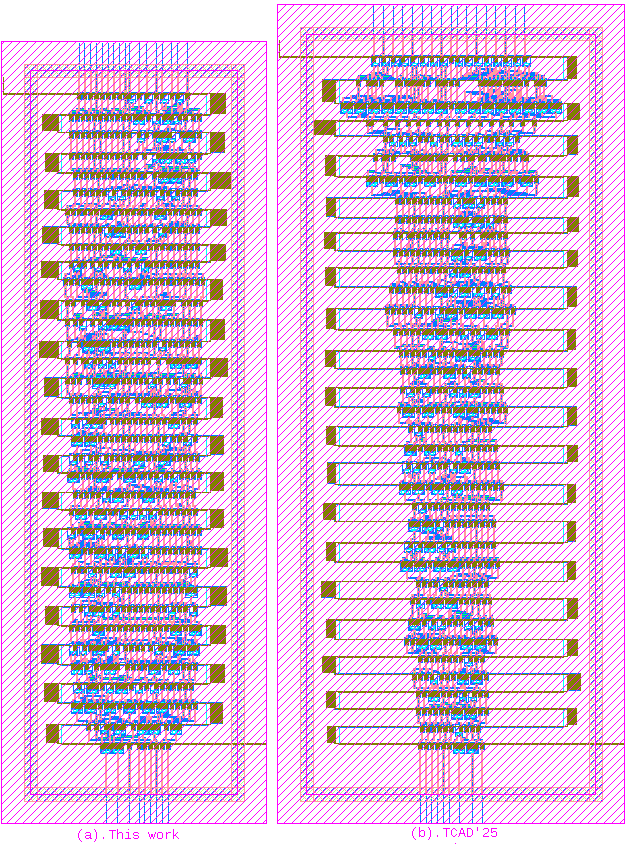}% This is a *.png file
%\vspace{-0.6cm}
\centering
\caption{Final \texttt{adder8} designs produced using layout-aware optimization ($1.3\,\text{mm} \times 4\,\text{mm}$) vs JJ minimization~\cite{BS_TCAD_Lee} ($2\,\text{mm} \times 4.3\,\text{mm}$) as the technology legalization step.}\label{fig:GDS}
\end{figure}

\begin{figure*}[htbp]
\includegraphics[width=0.95\linewidth]{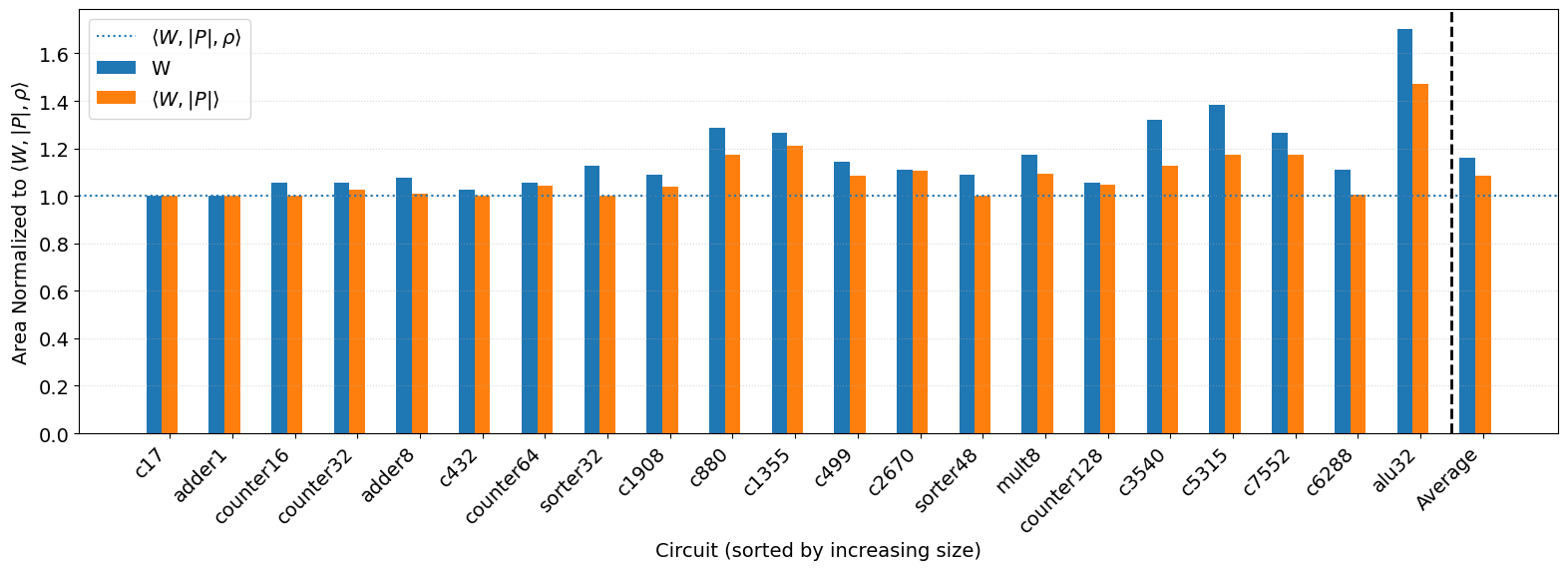}% This is a *.png file
%\vspace{-0.6cm}
\centering
\caption{Comparison of layout-aware results using a cost function of just $\mathcal{W}$ or $\langle\mathcal{W}, |\mathrm{P}|\rangle$ normalized against results using our full cost function of $\langle\mathcal{W}, |\mathrm{P}|, \rho\rangle$ .}\label{fig:cost_ablation}
\end{figure*}

\begin{figure}[htbp]
\includegraphics[width=0.95\linewidth]{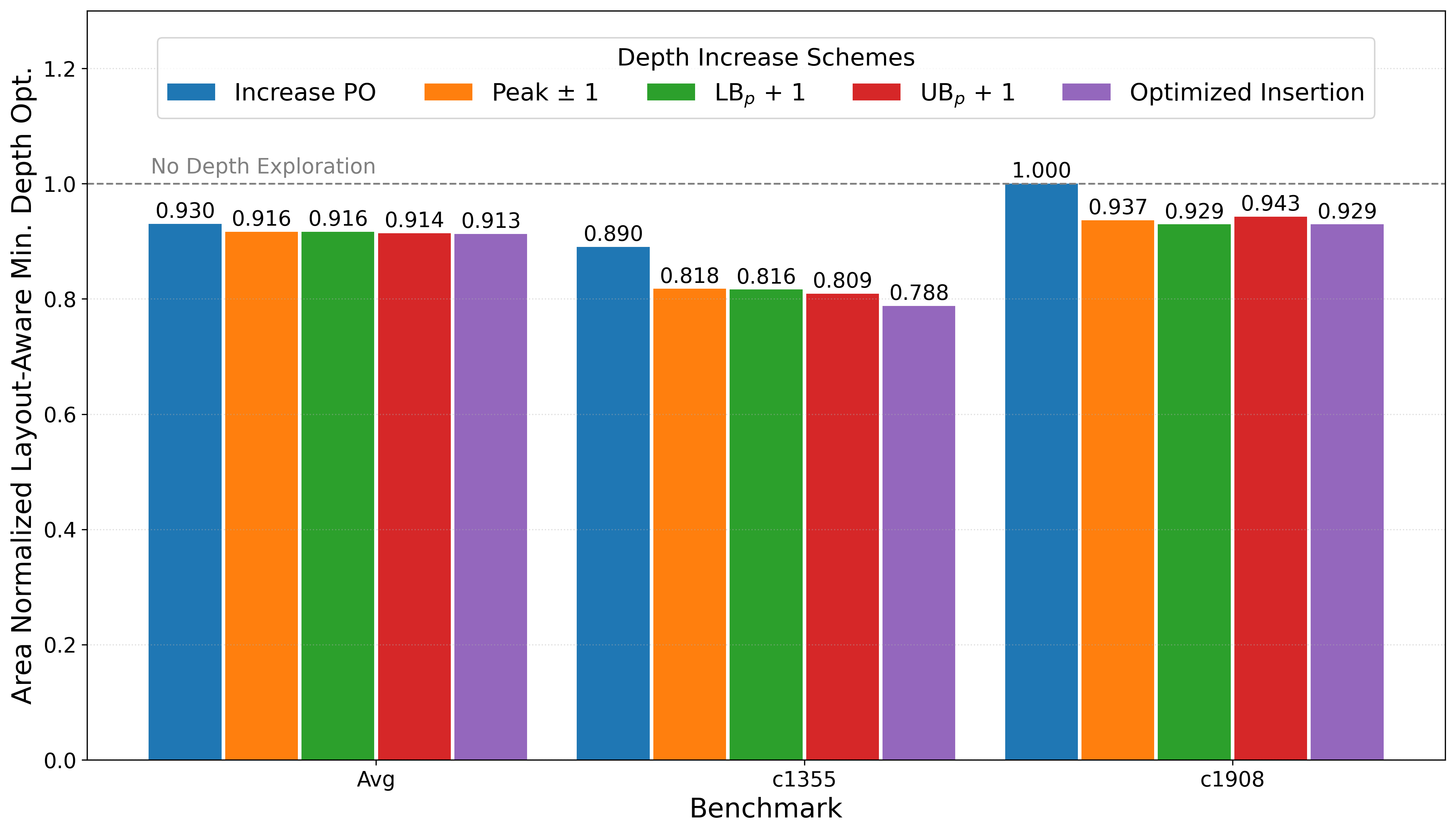}% This is a *.png file
%\vspace{-0.6cm}
\centering
\caption{Ablation evaluating impact of the location where the empty row is inserted during depth exploration.}\label{fig:depth_increase}
\end{figure}

\begin{figure}[htbp]
\includegraphics[width=0.95\linewidth]{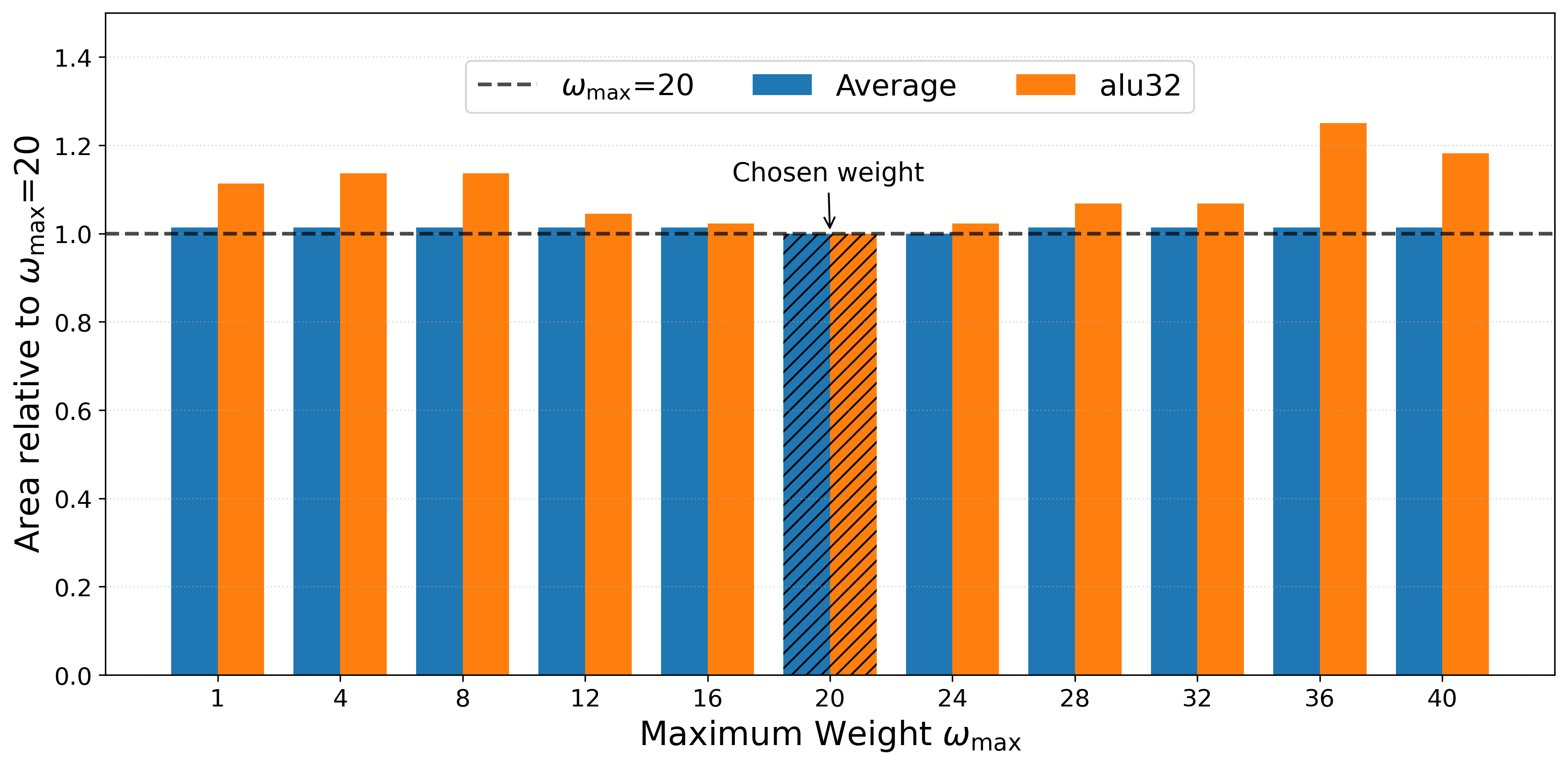}% This is a *.png file
%\vspace{-0.6cm}
\centering
\caption{Ablation studying the impact of the chosen maximum weight of a peak row.}\label{fig:weight_ablation}
\end{figure}

\subsection{Ablations}

The proposed optimization framework combines several interacting ideas to achieve substantial reductions in physical area. To understand the role of each component, we perform a series of ablation studies that separately evaluate the proposed cost function, heuristic depth exploration, and critical-density weighting. These experiments provide insight into which components drive the observed improvements and demonstrate the robustness of the proposed framework.

Figure~\ref{fig:cost_ablation} evaluates the contribution of each component of the proposed lexicographic objective. Optimizing circuit width alone produces the poorest results, while incorporating the number of peak rows provides a consistent improvement across nearly all benchmarks. Adding the critical-density metric yields an additional reduction in area, particularly for larger circuits where congestion surrounding peak rows is more pronounced. These results demonstrate that the secondary objectives effectively differentiate candidate moves within optimization plateaus, steering legalization toward solutions that reduce surrounding congestion rather than merely redistributing it. Consequently, the proposed cost hierarchy of $\langle\mathcal{W}, |\mathrm{P}|, \rho\rangle$ provides progressively richer guidance, with each successive metric resolving ambiguities left by the preceding objectives.

Figure~\ref{fig:depth_increase} evaluates several strategies for selecting the location of the inserted empty row during depth exploration. All insertion schemes achieve significant improvements over fixed-depth optimization, demonstrating that the additional placement flexibility provided by increased depth is the dominant factor. Targeting the critical region surrounding peak rows provides a consistent, albeit modest, improvement over naive insertion schemes, with the proposed heuristic achieving the best overall performance.

Figure~\ref{fig:weight_ablation} evaluates the impact of the critical-density weighting parameter $\omega_{\max}$. The proposed framework exhibits strong robustness to the choice of $\omega_{\max}$, with average area varying by less than a few percent across the evaluated range. While the average behavior is largely insensitive to the selected weight, larger congestion-dominated benchmarks such as \texttt{alu32} benefit from the chosen value of $\omega_{\max}=20$, demonstrating the effectiveness of emphasizing critical density in the vicinity of peak rows.  

Taken together, these studies show that the proposed optimization objective is the dominant source of improvement, while the remaining heuristics provide complementary gains and exhibit strong robustness across benchmark circuits.

\section{Conclusions}

This work defines a layout-aware optimization problem within AQFP technology legalization by introducing the circuit width--depth product as an optimization objective that more accurately captures final physical design area than conventional JJ-count minimization. We proved that the resulting optimization problem is NP-complete and developed heuristic algorithms that efficiently optimize this objective while scaling to large benchmark circuits.

Experimental results demonstrated that the proposed framework consistently improves physical design quality, achieving an average post-placement area reduction of 30\% compared to the state-of-the-art, with more than $2\times$ area reduction on several large benchmark circuits. Although post-synthesis optimization initially introduces an average 17\% increase in JJ count, the improved row distribution substantially simplifies physical implementation, reducing the final overhead to only 3\% after placement while consistently producing smaller layouts.

Beyond the immediate area improvements, this work demonstrates placement-aware objectives can substantially improve physical design costs in AQFP. We hope our results motivate future AQFP design flows that jointly optimize synthesis and physical layout to better capture downstream implementation costs. Future work includes incorporating more sophisticated global search techniques to escape local minima, integrating technology mapping with layout-aware logic distribution, and developing legalization objectives that more tightly model downstream placement and routing challenges.

\section{Acknowledgments}
We used OpenAI ChatGPT and Google Gemini only for grammar correction and language polishing. These tools did not generate research ideas, technical content, results, figures,
or references.

\bibliographystyle{IEEEtran}
\bibliography{bibliography}

\end{document}